\title{A Semantics for Belief in Simplicial Complexes}
\author{Adam Bjorndahl
\institute{Carnegie Mellon University\\ Pittsburgh, PA}
\email{abjorn@cmu.edu}
\and
Philip Sink
\institute{Carnegie Mellon University\\ Pittsburgh, PA}
\email{psink@andrew.cmu.edu}
}
\newcommand{\titlerunning}{A Semantics for Belief in Simplicial Complexes}
\newcommand{\authorrunning}{A. Bjorndahl, P. Sink}
\newcommand{\M}{\mathcal{M}}
\newcommand{\draft}[1]{{\color{red}{\textsc{[#1]}}}}
\newcommand{\commentout}[1]{}
\newcommand{\defin}[1]{\textbf{#1}}
\renewcommand{\phi}{\varphi}
\newcommand{\lthen}{\rightarrow}
\newcommand{\F}{\mathcal{F}}
\renewcommand{\M}{\mathcal{M}}
\renewcommand{\L}{\mathcal{L}}
\newcommand{\N}{\mathcal{N}}
\newcommand{\full}{\textsf{FULL}}
\newcommand{\mysetminusD}{\hbox{\tikz{\draw[line width=0.6pt,line cap=round] (3pt,0) -- (0,6pt);}}}
\newcommand{\mysetminusT}{\mysetminusD}
\newcommand{\mysetminusS}{\hbox{\tikz{\draw[line width=0.45pt,line cap=round] (2pt,0) -- (0,4pt);}}}
\newcommand{\mysetminusSS}{\hbox{\tikz{\draw[line width=0.4pt,line cap=round] (1.5pt,0) -- (0,3pt);}}}
\newcommand{\mysetminus}{\mathbin{\mathchoice{\mysetminusD}{\mysetminusT}{\mysetminusS}{\mysetminusSS}}}
\begin{document}
\maketitle

\begin{abstract}
We provide novel semantics for belief using simplicial complexes. In our framework, belief is a KD45 modality that satisfies ``knowledge implies belief'' (``If you know phi, then you believe phi''); in addition, we adopt the (standard) assumption that each facet in our simplicial models contains exactly one vertex for each agent. No existing model of belief in simplicial complexes that we are aware of is able to satisfy all of these conditions without trivializing belief to coincide with knowledge. We establish a truth-preserving correspondence between our simplicial framework and standard relational models for knowledge and belief;
this involves, notably, proving that all relational models can be simulated using proper relational models, a result of independent interest.
Finally, we apply these results to provide a simple axiomatization.
\end{abstract}

\section{Introduction} \label{sec:int}

Interpreting multi-agent epistemic logic using simplicial complexes is a recent and thriving area of research.
\cite{SimpBel,KaSC,FA,HG,SimpSet,SimpDEL,Death,DoA} 
Our goal in this paper is to extend this approach to accommodate not only knowledge but also \emph{belief};
this turns out to involve some subtleties, the resolutions of which are interesting in their own right.


To set the stage, we begin by briefly reviewing the basic notion of a simplicial complex and the intended epistemic interpretation. Informally, a \emph{simplex} can be pictured as a multi-dimensional triangle; that is, a collection of vertices where every vertex is connected to every other vertex by an edge. So with three vertices we obtain a standard triangle, four gives a tetrahedron, and in general, $n$ many vertices gives an $(n-1)$-dimensional triangle. A \emph{simplicial complex} can be thought of as a collection of simplices with some perhaps sharing vertices, edges, or larger faces. The simple example below depicts a $1$-dimensional simplex ($bc$) and a $2$-dimensional simplex ($acd$) glued together at a single vertex ($c$):

$$
\begin{tikzcd}
	&                                           & {a} \arrow[d, no head] \\
	{b} \arrow[r, no head] & {c} \arrow[r, no head] \arrow[ru, no head] & {d}                   
\end{tikzcd}$$

Formally, a \defin{simplicial complex} is a set $S$ closed under subsets; that is, if $X \subseteq Y \in S$, then $X \in S$.
Intuitively, $S$ is the set of all triangles in the simplicial complex. Closure under subsets corresponds to the idea that any triangle (of any dimension) contains all of the lower-dimensional triangles that make it up; for example, a $2$-dimensional triangle contains its edges (which are $1$-dimensional triangles), and these edges contain their endpoint vertices (which are $0$-dimensional triangles). Accordingly,
every element $X \in S$ is called a \emph{face},
the elements in such an $X$ are called \emph{vertices},
and those faces $X \in S$ that are maximal with respect to set inclusion are called \emph{facets}. For example, the diagram above corresponds to the simplicial complex
$$S = \{\{b,c\}, \{a,c,d\}, \{a,c\}, \{a,d\}, \{c,d\}, \{b\}, \{a\}, \{c\}, \{d\}, \emptyset\},$$
which has facets $\{b,c\}$ and $\{a,c,d\}$.

The basic idea behind interpreting epistemic logic using simplicial complexes is to associate vertices with agents and to view facets as analogous to possible worlds
\cite{KaSC,SimpDEL}.
Intuitively, a vertex associated to agent $a$ represents a particular ``perspective'' of $a$, with the simplicial structure encoding which agent perspectives are epistemically compatible with which others.
Thus, whereas in standard relational semantics $a$'s knowledge is given by quantifying over those worlds that $a$ ``considers possible'' (via an accessibility relation), in the simplicial setting this quantification is instead over those facets that share a vertex associated to agent $a$. We formalize all the details in Section \ref{sec:sem}.

As noted, the goal of this paper is to generalize this semantics to provide a model for multi-agent belief instead of (or in addition to) knowledge.
A key obstacle to this endeavor is that simplicial semantics hardcodes \emph{factivity}
($K\phi \lthen \phi$)
in a way that relational semantics does not.
Naturally, models for belief ought not validate the corresponding principle ($B\phi \lthen \phi$), since agents can have false beliefs.

In relational semantics, knowledge is typically weakened to belief by passing to a subset of those worlds considered possible. More precisely, if $R(w)$ denotes the set of all worlds epistemically accessible from $w$, then a reasonable candidate for a relation $Q$ representing \emph{doxastic} accessibility (i.e., beliefs) is one where $Q \subseteq R$ (so also $Q(w) \subseteq R(w)$). This ensures that knowledge implies belief ($K\phi \lthen B\phi$)
and allows agents to potentially believe falsehoods even though they cannot know falsehoods (because $Q$ need not be reflexive even though $R$ is assumed to be).

Somewhat surprisingly, the obvious analogous approach does not work out in the simplicial setting. Suppose our model begins with a simplicial complex $S$ in the background
(representing knowledge, as usual),
and adds a subcomplex $S' \subseteq S$
to represent the beliefs of
each agent---so to determine an agent's beliefs we would only quantify over facets in $S'$.
Then, as desired, for a given facet $X$ in $S$, it might be the case that $\varphi$ is true at $X$ yet at all facets $X'$ in $S'$ which are $a$-accessible from $X$, $\varphi$ is false.
This produces a counterexample to factivity for belief.
However, at all facets $X'$ in $S'$, every agent \textit{will} have factive beliefs, and as a consequence
this model does not allow agents to consider it possible that any \textit{other} agents have false beliefs;
this is formalized in Proposition \ref{pro:sref}.

Our solution to this issue is to enrich the model with multiple ``belief'' subcomplexes---one for each agent.
This solves the problem outlined above, and leads naturally to the question of how these enriched simplicial models are related to standard relational models for knowledge and belief. This question, too, is not entirely straightforward, since the standard translation between simplicial and relational frameworks requires a ``properness'' condition that turns out to be quite restrictive when it comes to representing beliefs. We discuss this translation, and its restrictiveness, in Section \ref{sec:con}, and then
establish in Section \ref{sec:prop} a general result that allows one to simulate any model with a proper model, effectively dissolving the restriction. This result is of independent interest and as such we attempt to present the details of that section in a relatively self-contained way.

We are not the first to propose a representation of belief using simplicial complexes.
Both ``Knowledge and Simplicial Complexes'' \cite{KaSC} as well as ``Simplicial Belief'' \cite{SimpBel} tackle the same idea.
While each of these approaches is interesting in its own right, they differ substantially from our models both in formal implementation and in the overall logic of belief that they validate. We outline these frameworks and discuss their relationship to our models in Section \ref{sec:Conc}.

The rest of the paper is organized as follows. In Section \ref{sec:sem}, we provide the core definitions, introduce our new model, and prove that in this framework multiple complexes are needed to allow the right kind of inter-agent uncertainty. In Section \ref{sec:con} we establish a translation between
standard relational models for knowledge and belief and
our simplicial representation, highlighting the role of the properness condition, which is thoroughly analyzed in Section \ref{sec:prop}. Section \ref{sec:sac} applies the translation results to prove soundness and completeness. Section \ref{sec:Conc} concludes with a discussion of related and future work.

\section{Semantics for Belief} \label{sec:sem}


\commentout{	
	Our semantics for belief is motivated around the particular epistemic interpretation that the nodes of a simplicial model should be thought of as perspectives, or points of view. This is similar to the view outlined in \cite{HG}. Simplicial models largely fall into two categories. The first is those which assign logical information directly to the faces or facets of the simplices \cite{Death}\cite{FA}\cite{SimpSet}. The second are those which assign logical information to the nodes of the complex. \cite{DoA}\cite{KaSC}\cite{SimpDEL}\cite{SimpDEL}\cite{HG} In the usual knowledge setting, when one assigns perspectives to the facets, the correlation between Kripke frames and simplicial models is quite clear. As long as one restricts to proper frames, there is in fact a 1-1 correspondence and no information loss in this translation. By contrast, when one assigns atomic formula to the nodes, there is information loss in both directions. On the one hand, one needs to further restrict the class of Kripke frames to those that are Ideal (or something analogous, depending on how one handles atomic formulas). These have no obvious simplicial corollaries. On the other hand, two simplicial models with different assignments to the nodes, but where the same formulas end up true at the faces, correspond to the same (proper!) Kripke frames. Consider the example with two agents, $a$ and $b$, and each has one perspective, $a_0$ and $b_0$ respectively. The facet $\{a_0,b_0\}$ is the only facet in the knowledge complex, and on the one hand, $P$ is true at $a_0$ and $Q$ is true at $b_0$, and on the other, this is flipped. These two models correspond to the exact same Kripke frame - one with one world, each reflexive edge, and both $P$ and $Q$ are true at this world. However, these simplicial models intuitively model something different. In one, $P$ is in $a$'s perspective at the world, and in the other, it is in $b$'s perspective.
	
	This information loss would be of less concern if it did not have any obvious epistemic import. However, it does seem to. The constant reference to the nodes as ``perspectives'' or ``points of view'' suggests there must be \textit{some} interpretation along these lines which is compelling. This is, indeed, what the move to belief revision was supposed to capture. Perhaps what makes models which correspond to the same Kripke models different is that the agents differ in terms of the weight they place on information - they can revise on some information but not other information. The former is information in a perspective, the latter is information they know from the simplicial complex(es) themselves. Another way of caching out this information, which is also possible related, is to extend to relevance logic. Information in perspectives is what's ``relevant'' to that agent. This might also nicely line up with our thinking about revision, though there is more to investigate here. On another already partially explored direction, perhaps a three-valued system, ala \cite{DoA}.
	is the answer. On this view, simplicial complexes are not best thought of as binary models, like Kripke frames, but instead certain formulas are neither true nor false at a facet.
	
	Maybe the two tiered system ultimately explored in this paper is compelling enough as a way of capturing the epistemic interpretation of what it means for a formula to be assigned to a perspective - where the perspectives live at worlds but there is additionally an assignment of maximal sets to facets extending those perspectives. Even so, there are still ways to expand on this system further. Either way, the final point is that there is information simplicial complexes have that Kripke frames do not when we assign formulas to perspectives, and that this information seems to have clear epistemic relevance.
}

The literature on simplicial semantics divides roughly into two approaches to defining valuations for propositional atoms. Loosely speaking, the first approach assigns
truth values directly to the facets
\cite{Death,FA,SimpSet},
while the second ``vertex-based'' approach assigns them (in a partial way) to the vertices and then ``lifts'' them to facets.
\cite{DoA,KaSC,SimpDEL,HG}
While we feel that the latter approach is more interesting from a logical and epistemic perspective, the former is much more technically straightforward and will allow us to better illustrate the specific novelty of the belief models we wish to present in this work, so that is the approach we adopt here. 


Let $Ag$ be a finite set of agents.
A \defin{simplicial frame} is a tuple $(N,V,S)$ where $N$ is a set of \emph{nodes}, $V: N \to Ag$ assigns an agent to each node (this is sometimes called the \emph{coloring function}), and $S \subseteq 2^{N}$ is a simplicial complex. Let $\F(S)$ denote the set of facets of $S$. As in much of the previous literature
\cite{KaSC,SimpDEL},
we henceforth restrict attention to simplicial frames that satisfy the \defin{uniquely colored facets (UCF)} condition: for each $X \in \F(S)$ and each $a \in Ag$, there is exactly one node $n \in X$ such that $V(n) = a$. Note that this implies that all facets of $S$ contain exactly $|Ag|$ nodes, one for each agent.

Let $\mathfrak{P}$ be a countable set of propositional atoms. A \defin{simplicial model} $\M$ is a simplicial frame $(N,V,S)$ together with a \emph{valuation function} $L: \mathfrak{P} \to 2^{\F(S)}$; intuitively, $L(P)$ tells us which facets $P$ is true at. In this setting, the language $\L_{K}(Ag)$ recursively defined by
$$\varphi ::= P \, | \, \bot \, | \, \phi \lthen \psi \, | \, K_{a}\phi,$$
where $P \in \mathfrak{P}$ and $a \in Ag$, can be interpreted in simplicial models as follows:
\begin{align*}
	\M,X & \models P \text{ iff } X \in L(P)\\
	\M,X & \nvDash \bot\\
	\M,X & \models \phi \lthen \psi \text{ iff } \M,X \models \phi \text{ implies } \M,X \models \psi\\
	\M,X & \models K_a \phi \text{ iff } \forall Y \in \F(S) \text{ if } \pi_a(Y) = \pi_a(X) \text{ then } \M,Y \models \phi,
\end{align*}
where $\pi_a(Z) = V^{-1}(a) \cap Z$, that is, $\pi_{a}$ maps each facet to the unique $a$-colored vertex it contains.
The other Boolean connectives are defined in the usual way.

This is the standard simplicial semantics for multi-agent knowledge: agent $a \in Ag$ knows $\phi$ at facet $X$ just in case $\phi$ is true at all facets that share an $a$-colored node with $X$. Note that the factivity of knowledge follows immediately: if $X \models K_{a} \phi$, then since trivially $\pi_{a}(X) = \pi_{a}(X)$, the semantic clause for $K_{a}$ forces us to conclude that $X \models \phi$.

\commentout{
	What we need to interpret this is a set $N$ of nodes, a function $V:N\rightarrow A$ which assigns each node to an agent, called the ``coloring'' function, and a function $L:N\rightarrow 3^\mathfrak{P}$ which assigns each node to a set of literals. The interpretation is that $L(n)(P)=1$ if and only if $P$ is associated with $n$, $L(n)(P)=0$ if and only if $\neg P$ is associated with $n$, and $L(n)(P)=2$ if and only if neither is associated with $n$.
	
	Our first key idea is that we can use $N$, $V$, and $L$ to create a kind of \defin{maximal} simplicial complex. Like much of the previous literature, we will assume our simplicial complexes are uniquely colored. Specifically, each facet of our complexes has dimension $|A|$, and no two nodes are associated to the same agent. We call this condition \textit{UCF} for ``Uniquely Colored Facets''. Thus, the \defin{Maximal Complex} is the set of subsets of $N$ such that the associated logical content with that subset is consistent, and it satisfies the \textit{UCF} condition. That is, it's the subsets $x\subseteq N$ such that $|x|=|A|$, $\bigcup_{u\in x}L(u)$ is consistent, and for all $u,v\in x$, $V(u)\neq V(v)$. We refer to $\mathfrak{M}((N,V,L)$ as the \defin{Maximal Complex} of $N$, $V$, and $L$.\footnote{We could just as easily say that the function $L$ is a map to consistent sets of literals, as the maximal complex selects faces on the basis of consistency. Any inconsistent vertices would simply not be able to be incorporated into the maximal complex. It is simpler to do it this way, however.} When the context is clear, we will refer to it simply as the maximal complex. The maximal complex can be defined set theoretically as follows:
	
	\begin{align*}
		\mathfrak{M}(N,V,L):=\{y\in 2^N~|~\exists x\in 2^N (&y\subseteq x\\&\wedge\neg(\exists P\in\mathfrak{P}\wedge(\exists u,v\in x(L(u)(P)=1\wedge L(v)(P)=0))) \\&\wedge (|x|=|A|)
		\\&\wedge (\forall u,v\in x(V(u)\neq V(v))))\}
	\end{align*}
	
	Because the subset of a consistent set remains consistent, it's clear that the maximal complex is a simplicial complex. All complexes we consider in this paper will be \textit{UCF} subcomplexes of the maximal complex.
	
	We can now explain in more detail what goes wrong if we do not have distinct $S_a$ for each $a\in A$. Let $S_B\subseteq \mathfrak{M}(N,V,L)$ be a \textit{UCF} complex. Under the \textit{UCF} condition, there is a useful function one can define. For any facet $X$ and agent $a$, we say that $\pi_a(X)$ is the unique vertex of $X$ which is $a$-colored. Also, let $\mathcal{F}(S)$ denote the facets of $S$ for any simplicial complex $S$. With this in hand, the satisfaction conditions are as follows, supposing $X\in\mathcal{F}(\mathfrak{M}(N,V,L))$:
	
	\begin{align*}
		\mathcal{M},X&\models P\text{ iff }\exists a\in A(L(\pi_a(X))(P)=1)
		\\ \mathcal{M},X&\nvDash\bot
		\\ \mathcal{M},X&\models\varphi\rightarrow\psi\text{ iff, if }\mathcal{M},X\models\varphi\text{ then }\mathcal{M},X\models\psi
		\\ \mathcal{M},X&\models B_a\varphi\text{ iff }\forall Y\in\mathcal{F}(S_B)\text{ if }\pi_a(X)=\pi_a(Y)\text{ then }\mathcal{M},Y\models\varphi\end{align*}
}

\commentout{	
	Consider the following example:
	
	$$\begin{tikzcd}[scale cd=1.5]
		\color{green}c_1\color{black}(\neg P) \arrow[red, rdd, no head] \arrow[red, rr, no head] \arrow[green, rdd, no head, bend right] \arrow[blue, rdd, no head, bend left] &                                                                                             & \color{red}a_0 \arrow[red, ldd, no head] \arrow[red, rdd, no head] \arrow[red, rr, no head] \arrow[blue, rr, no head, bend right] \arrow[green, rr, no head, bend left] \arrow[green, ll, no head, bend right] \arrow[blue, ll, no head, bend left] &                                                                                                                                          & \color{blue}b_1\color{black}(P) \arrow[red, ldd, no head] \\
		&                                                                                             &                                                                                                                                                                                                 &                                                                                                                                          &                          \\
		& \color{blue}b_0 \arrow[rr, no head] \arrow[blue, ruu, no head, bend left] \arrow[green, ruu, no head, bend right] &                                                                                                                                                                                                 & \color{green}c_0 \arrow[blue, ruu, no head, bend left] \arrow[green, ruu, no head, bend right] \arrow[blue, luu, no head, bend right] \arrow[green, luu, no head, bend left] &                         
	\end{tikzcd}$$
	
	This simplicial complex is given by... 
	
	This is equivalent to the following Kripke frame:
	
	$$
	\begin{tikzcd}[scale cd=1.5]
		& w_0 \arrow[red, ldd] \arrow[red, rdd] \arrow[blue, ldd, bend right] \arrow[green, rdd, bend left] &                                                                         \\
		&                                                                            &                                                                         \\
		w_1(\neg P) \arrow[red, rr, bend left] \arrow[green, loop, distance=2em, in=215, out=145] &                                                                            & w_2(P) \arrow[red, ll, bend left] \arrow[blue, loop, distance=2em, in=35, out=325]
	\end{tikzcd}$$
	
	We assume these frames are transitive and Euclidean. The correspondence to the complex is given by... 
}


To incorporate belief---which of course we do not want to be factive---we add in additional simplicial complexes. A \defin{simplicial belief model} is a tuple $(N,V,S,\{S_{a}\}_{a \in Ag},L)$, where $(N,V,S,L)$ is a simplicial model and each $S_{a}$ is a nonempty simplicial complex contained in $S$; we further assume that each complex $S_{a}$ satisfies the
UCF condition. Note that this guarantees that $\F(S_{a}) \subseteq \F(S)$. Expand the language recursively to include unary belief modalities $B_{a}$ for each agent $a \in Ag$ and call the result $\L_{KB}(Ag)$.
Moreover, to guarantee
consistency of beliefs (i.e., that $\lnot B_{a} \bot$ is valid),
we also assume that for every $a$-colored node $n$, there is $X\in\F(S_a)$ such that $n\in X$.
Given a simplicial belief model $\M$, we can then enrich the semantics with the following clause:
\begin{align*}
	\M,X & \models B_a \phi \text{ iff } \forall Y \in \F(S_{a}) \text{ if } \pi_a(Y) = \pi_a(X) \text{ then } \M,Y \models \phi.
\end{align*}

So this is just like the clause for knowledge, except for each agent $a$ we quantify only over facets in $S_{a}$ rather than all the facets of $S$. Because of this, belief is not in general forced to be factive, since at any $X \in \F(S) \mysetminus \F(S_{a})$ we can have $X \models B_{a}P$ but $X \nvDash P$ (for example, if $L(P) = \F(S_{a})$).

As discussed in Section \ref{sec:int}, a natural question that arises is why we need a different simplicial complex $S_{a}$ for each agent $a$. After all, simplicial models for multi-agent knowledge make do with just one. The issue is made clear in the following proposition.

\begin{proposition} \label{pro:sref}
	Let $\M$ be a simplicial belief model with $S_{a} = S_{b}$. Then $\M \models B_{a}(B_{b} \phi \lthen \phi)$.
\end{proposition}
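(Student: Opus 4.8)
The plan is to peel off the two belief modalities one at a time using their semantic clauses, thereby reducing the whole statement to the fact that belief behaves \emph{factively} at any facet that itself lies in the relevant belief complex. Concretely, to establish $\M \models B_a(B_b\phi \lthen \phi)$ I would fix an arbitrary facet $X \in \F(S)$ and show $\M, X \models B_a(B_b \phi \lthen \phi)$; since $X$ is arbitrary, this suffices.

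Applying the clause for $B_a$, it is enough to take an arbitrary $Y \in \F(S_a)$ with $\pi_a(Y) = \pi_a(X)$ and prove $\M, Y \models B_b\phi \lthen \phi$. By the clause for implication this amounts to assuming $\M, Y \models B_b\phi$ and deriving $\M, Y \models \phi$. Unwinding the assumption $\M, Y \models B_b\phi$ via the belief clause gives: for every $Z \in \F(S_b)$ with $\pi_b(Z) = \pi_b(Y)$, we have $\M, Z \models \phi$.

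The crux is now a single observation. Since by hypothesis $S_a = S_b$, the two complexes have the same facets, so $Y \in \F(S_a) = \F(S_b)$. Thus $Y$ is itself among the facets quantified over in the clause for $B_b$, and since trivially $\pi_b(Y) = \pi_b(Y)$, instantiating $Z := Y$ in the universal statement yields exactly $\M, Y \models \phi$, as required. As $Y$ and $X$ were arbitrary, the result follows.

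I do not expect any genuine obstacle here: the argument is a direct unwinding of the semantics, and the only idea needed is to notice that the inner belief $B_b\phi$ is being evaluated at a facet $Y$ that already sits inside the belief complex $S_b$ (because $S_a = S_b$), which is precisely the situation in which the restricted quantification over $\F(S_b)$ collapses into factivity---mirroring the remark in Section \ref{sec:int} that agents have factive beliefs at all facets lying in a belief subcomplex. The proposition simply packages this failure of genuinely non-factive belief into the object-language formula $B_a(B_b\phi \lthen \phi)$, explaining why distinct complexes $S_a \neq S_b$ are needed to let one agent consider it possible that another holds a false belief.
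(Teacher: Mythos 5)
Your proof is correct and follows essentially the same route as the paper's: fix an arbitrary facet $X$, unwind the clauses for $B_a$ and $B_b$, and observe that since $S_a = S_b$ the witness facet $Y \in \F(S_a)$ also lies in $\F(S_b)$, so instantiating the inner universal quantifier at $Z := Y$ collapses $B_b$ into factivity at $Y$. Nothing is missing; the concluding remark tying this to the need for distinct complexes $S_a \neq S_b$ matches the paper's own discussion following the proposition.
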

\begin{proof}
	Let $X \in \F(S)$, and consider any $Y \in \F(S_{a})$ such that $\pi_{a}(Y) = \pi_{a}(X)$. If $Y \models B_{b} \phi$, then by definition for all $Z \in \F(S_{b})$ such that $\pi_{b}(Z) = \pi_{b}(Y)$ we must have $Z \models \phi$. Since $\F(S_{b}) = \F(S_{a})$ and $Y \in \F(S_{a})$ by assumption, we know that $Y \in \F(S_{b})$, so in fact we must have $Y \models \phi$. This shows that $Y \models B_{b}\phi \lthen \phi$ and thus $X \models B_{a}(B_{b}\phi \lthen \phi)$, as desired.
\end{proof}

In other words, when $S_{a}$ and $S_{b}$ coincide, agent $a$ becomes incapable of considering agent $b$ fallible with respect to their beliefs (and, of course, vice-versa). This is not an issue for multi-agent knowledge because knowledge is not fallible at all (so, naturally, agents know this).

\section{Connection to Relational Semantics} \label{sec:con}

It is well-known \cite{KaSC} that a broad class of relational models for multi-agent knowledge can be transformed into simplicial models in a truth-preserving manner. We briefly review this transformation here in order to illustrate how it works---and where it fails---when belief is added into the mix.

Recall that a \defin{relational model (over $Ag$)} is a tuple $\N = (W,(R_{a})_{a \in Ag},v)$ where $W$ is a nonempty set of \emph{possible worlds}, each $R_{a}$ is a binary relation on $W$ called an \emph{accessibility relation}, and $v: \mathfrak{P} \to 2^{W}$ is a \emph{valuation}. Semantics are given in the usual way:
\begin{align*}
	\N,w & \models P \text{ iff } w \in v(P)\\
	\N,w & \nvDash \bot\\
	\N,w & \models \phi \lthen \psi \text{ iff } \N,w \models \phi \text{ implies } \N,w \models \psi\\
	\N,w & \models K_a \phi \text{ iff } \forall w' \in R_{a}(w) (\N,w' \models \phi),
\end{align*}
where $R_{a}(w) = \{w' \in W \: : \: wR_{a}w'\}$.
A \defin{relational frame} is a model without the valuation.

Except where otherwise noted, we restrict our attention to relational models where each $R_{a}$ is an equivalence relation---such models validate factivity ($K_{a} \phi \lthen \phi$), positive introspection ($K_{a} \phi \lthen K_{a} K_{a} \phi$), and negative introspection ($\lnot K_{a} \phi \lthen K_{a} \lnot K_{a} \phi$)---and refer to these as \defin{relational models for introspective knowledge}. Since these principles are also validated by all simplicial models, we must impose them on relational structures in order to have any hope of a truth-preserving transformation. We write $[w]_{a}$ to denote the equivalence class of $w$ with respect to $R_{a}$.

Given a relational model $\N$ as above, we define a simplicial frame $(N_{\N},V_{\N},S_{\N})$ as follows:
\begin{itemize}
	\item
	$N_{\N} = \{([w]_{a},a) \: : \: w \in W, a \in Ag\}$
	\item
	$V_{\N}([w]_{a},a) = a$
	\item
	$S_{\N} = \{F \subseteq N_{\N} \: : \: \bigcap_{n \in F} e(n) \neq \emptyset\}$,
\end{itemize}
where $e: N_{\N} \to 2^{W}$ is projection to the first coordinate (returning the equivalence class associated with each node). Intuitively, each equivalence class $[w]_{a}$ captures a certain epistemic ``perspective'' of agent $a$, and thus we take our nodes to be all such equivalence classes (labelled by the agent in question). The simplicial structure is then determined by grouping precisely those nodes that correspond to equivalence classes with nonempty intersection, since these are the perspectives that can all obtain simultaneously (i.e., at the worlds in the intersection). Thus, each face of $S_{\N}$ corresponds to nonempty intersection of equivalence classes from distinct agents.

\begin{proposition}
	The simplicial frame $(N_{\N},V_{\N},S_{\N})$ satisfies UCF.
\end{proposition}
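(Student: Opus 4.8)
The plan is to establish two facts about the faces of $S_{\N}$: that each face carries at most one node of each color, and that each face is contained in a distinguished face having exactly one node of every color. Together these pin down the facets and yield UCF directly.

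First I would prove the ``at most one'' direction. Suppose toward a contradiction that a face $F \in S_{\N}$ contained two distinct $a$-colored nodes $([w]_{a},a)$ and $([w']_{a},a)$ with $[w]_{a} \neq [w']_{a}$. Since $R_{a}$ is an equivalence relation, its distinct equivalence classes are disjoint, so $[w]_{a} \cap [w']_{a} = \emptyset$. But both of these classes appear as factors in the intersection $\bigcap_{n \in F} e(n)$, forcing that intersection to be empty and contradicting $F \in S_{\N}$. Hence every face of $S_{\N}$ contains at most one node of each color.

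Next I would produce, for each face $F$, a canonical fully-colored face containing it. Fix any $w \in \bigcap_{n \in F} e(n)$ (nonempty precisely because $F$ is a face), and set $F_{w} = \{([w]_{a},a) : a \in Ag\}$. By reflexivity of each $R_{a}$ we have $w \in [w]_{a}$ for every $a$, so $w \in \bigcap_{a \in Ag} [w]_{a}$ and thus $F_{w} \in S_{\N}$; moreover $F_{w}$ has exactly one node of each color by construction. I would then verify that $F \subseteq F_{w}$: if $([w']_{b},b) \in F$, then $w \in e(([w']_{b},b)) = [w']_{b}$, so $wR_{b}w'$ and hence $[w]_{b} = [w']_{b}$, giving $([w']_{b},b) = ([w]_{b},b) \in F_{w}$.

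Finally I would combine these observations. Every face is contained in some $F_{w}$, and each such $F_{w}$ is itself a face carrying exactly one node per color; therefore by maximality every facet $X \in \F(S_{\N})$ must coincide with such an $F_{w}$, and in particular contains exactly one node of each color, which is precisely the UCF condition. The step requiring the most care is the ``at least one of each color'' half---that no facet can be missing a color. This fails for arbitrary faces, so it must be extracted from maximality: it follows only once we know that every face sits inside the fully-colored face $F_{w}$, so that a facet has nowhere left to grow except to $F_{w}$ itself.
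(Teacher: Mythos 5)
Your proof is correct. The first half (no face contains two nodes of the same color, by disjointness of distinct $R_{a}$-equivalence classes) is exactly the paper's argument. The second half takes a mildly different route: the paper shows that a face $F$ missing an $a$-colored node is not maximal by adjoining the single node $([w]_{a},a)$ for some $w \in \bigcap_{n \in F} e(n)$, whereas you show that \emph{every} face is contained in the canonical fully-colored face $F_{w} = \{([w]_{a},a) : a \in Ag\}$ and then invoke maximality once. Both arguments rest on the same two ingredients (nonemptiness of $\bigcap_{n \in F} e(n)$ and reflexivity giving $w \in [w]_{a}$), so the core idea is shared; but your packaging proves strictly more, namely that the facets of $S_{\N}$ are precisely the sets $F_{w}$. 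That stronger fact is exactly the surjectivity of the map $f(w) = \{([w]_{a},a) : a \in Ag\}$, which the paper introduces immediately after this proposition and dismisses as ``not hard to see''---so your version front-loads that verification at no extra cost, while the paper's one-node-extension argument is slightly leaner if one only wants UCF itself.
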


\begin{proof}
	We must show that each facet contains exactly one node for each agent. First observe that no face in $S_{\N}$ can contain more than one $a$-colored node, since if $([w]_{a},a)$ and $([w']_{a}, a)$ are distinct, then $[w]_{a}$ and $[w']_{a}$ must also be distinct and thus have empty intersection.
	
	Now suppose that for some $a \in Ag$, $F \in S_{\N}$ does not contain an $a$-colored node. By definition, there exists some $w \in \bigcap_{n \in F} e(n)$; it follows immediately that $[w]_{a} \cap \bigcap_{n \in F} e(n) \neq \emptyset$, and thus $F \cup \{([w]_{a},a)\} \in S_{\N}$. This shows that $F$ is not a facet (since it's not maximal), which in turn implies that any facet must contain an $a$-colored node for every $a \in Ag$. 
\end{proof}

The latter half of the proof above suggests a natural way to associate worlds in $\N$ with facets in the simplicial frame we have constructed: namely, define $f \colon W \to \F(S_{\N})$ by
$$f(w) = \{([w]_{a},a) \: : \: a \in Ag\}.$$
It is not hard to see that $f$ is surjective, but unfortunately, it may not be injective: $f(w) = f(w')$ precisely when, for all $a \in Ag$, $[w]_{a} = [w']_{a}$. In this case there is no clear way to define a valuation on $\F(S_{\N})$ that ``simulates'' the valuation in $\N$, since two worlds that disagree about the truth value of $P$, say, may be associated to the same facet.

This motivates the following definition that has become standard in the literature
\cite{KaSC}:
a relational model is called \defin{proper} if, for all $w \in W$, $\big|\bigcap_{a \in Ag} [w]_{a} \big| = 1$.%
\footnote{This definition of properness differs from standard presentations when the relations are not equivalence relations, as has also been noted in \cite{Lehnherr_2025}.}
From this it easily follows that $f$ is injective. Define $L_{\N}(P) = f[v(P)] = \{f(w) \: : \: w \in v(P)\}$ and let $M_{\N}$ be the simplicial model $(N_{\N},V_{\N},S_{\N}, L_{\N})$. We then have:


\begin{theorem} \label{SimpTrans}
	Let $\N$ be a proper relational model. For all formulas $\phi \in \L_{K}(Ag)$, for all $w \in W$,
	$$\N,w \models \phi \textrm{ iff } \M_{\N}, f(w) \models \phi.$$
\end{theorem}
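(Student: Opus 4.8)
The plan is to prove the biconditional by induction on the structure of $\phi$, using throughout the fact that properness makes $f$ a \emph{bijection} between $W$ and $\F(S_{\N})$: surjectivity holds in general (as already observed), while injectivity is exactly what properness buys us. The atomic and Boolean cases are routine, and the genuine content lives in the modal case $K_a\phi$, where the task is to show that $f$ carries the relation $R_a$ onto the ``shares-an-$a$-node'' relation on facets.

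For the base case $\phi = P$, I would simply unwind the definitions: $\M_{\N}, f(w) \models P$ iff $f(w) \in L_{\N}(P) = f[v(P)]$. Here injectivity of $f$ is essential, since it guarantees that $f(w) \in f[v(P)]$ iff $w \in v(P)$, which is precisely $\N, w \models P$. (Without properness a world outside $v(P)$ could be sent to the same facet as one inside it, breaking this equivalence; this is exactly the obstruction flagged just before the definition of properness.) The case $\phi = \bot$ is immediate since both sides are false, and the implication case $\phi = \psi \lthen \chi$ follows directly from the induction hypothesis together with the fact that the semantic clause for $\lthen$ has identical form in both frameworks.

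The crux is the case $\phi = K_a\psi$. First I would compute $\pi_a(f(w)) = ([w]_a, a)$, since this is the unique $a$-colored node contained in $f(w)$. Because $f$ is surjective, every facet $Y \in \F(S_{\N})$ equals $f(w')$ for some $w'$, and then $\pi_a(f(w')) = ([w']_a, a)$. Hence $\pi_a(f(w')) = \pi_a(f(w))$ iff $[w']_a = [w]_a$, i.e.\ iff $w' \in R_a(w)$, using that $R_a$ is an equivalence relation. This identifies the set of facets sharing their $a$-colored node with $f(w)$ as exactly $f[R_a(w)]$. I would then chain the equivalences: $\M_{\N}, f(w) \models K_a\psi$ iff $\M_{\N}, f(w') \models \psi$ for all $w' \in R_a(w)$, which by the induction hypothesis holds iff $\N, w' \models \psi$ for all $w' \in R_a(w)$, i.e.\ iff $\N, w \models K_a\psi$.

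The main obstacle, and really the only non-bookkeeping step, is establishing the correspondence $\{Y \in \F(S_{\N}) : \pi_a(Y) = \pi_a(f(w))\} = f[R_a(w)]$. This hinges on two facts: that $f$ is a bijection (so that quantifying over facets is the same as quantifying over worlds), and that two facets agree on their $a$-colored node precisely when the underlying worlds are $R_a$-equivalent. Both follow from the construction of $(N_{\N},V_{\N},S_{\N})$ together with properness, so once they are in hand the induction closes cleanly.
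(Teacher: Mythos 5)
Your proposal is correct and follows essentially the same route as the paper's own proof: induction on formulas, with injectivity of $f$ (from properness) carrying the atomic case and the identification of $\{Y \in \F(S_{\N}) : \pi_a(Y) = \pi_a(f(w))\}$ with $f[R_a(w)]$ carrying the modal case. The only difference is presentational---you establish that set identity once and chain the equivalences, while the paper argues the two directions of the $K_a$ case separately---but the underlying argument is identical.
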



\begin{proof}

Induction on formulas. Let $P\in\mathfrak{P}$. First suppose $\N,w\models P$;
then
$w\in v(P)$, so
$f(w) \in f[v(P)] = L_{\N}(P)$,
which yields $\M_{\N},f(w)\models P$, as desired. Conversely, suppose $\M_{\N},f(w)\models P$; then
$f(w)\in L_{\N}(P) = f[v(P)]$.
Because $f$ is injective, we must then have $w\in v(P)$, which yields $\N,w\models P$ as desired.

The case for $\bot$ is trivial.

Now suppose inductively
that
for all $w \in W$, we have that
$\N,w \models \phi \textrm{ iff } \M_{\N}, f(w) \models \phi$
and
$\N,w \models \psi \textrm{ iff } \M_{\N}, f(w) \models \psi$.
We wish to show that this equivalence holds also for $\phi \lthen \psi$ and for $K_{a} \phi$. For the former, observe that
$\N,w\models\varphi\rightarrow\psi$
if and only if $\N,w \models \varphi$ implies $\N,w\models\psi$,
which (by the inductive hypothesis)
is true if and only if $\M_{\N},f(w)\models\varphi$ implies $\M_{\N},f(w)\models\psi$. And of course this is true if and only if $\M_{\N},f(w)\models\varphi\rightarrow\psi$, as desired.

At last we turn to the knowledge modalities.
Suppose $\N,w\models K_a\varphi$. Fix arbitrary $Y\in\mathcal{F}(S_\N)$ such that $\pi_a(Y)=\pi_a(f(w))$. By definition of $f$, $\pi_a(Y)=\pi_a(\{([w]_{a},a) \: : \: a \in Ag\})=([w]_{a},a)$.
Because $f$ is surjective, we can find
$w'\in W$ such that $f(w')=Y$.
So $([w']_{a},a) = \pi_a(f(w'))=\pi_a(Y)=([w]_{a},a)$, and therefore $[w']_a=[w]_a$, meaning $w'\in[w]_a$. We know from our initial supposition that for all $w'\in[w]_a$, $\N,w'\models\varphi$. So, by the inductive hypothesis, we have $\M_{\N},f(w')\models\varphi$, hence $\M_{\N},Y\models\varphi$; since $Y$ was chosen arbitrarily, this yields $\M_{\N},f(w)\models K_a\varphi$.

Conversely, suppose $\M_{\N},f(w)\models K_a\varphi$.
By definition,
for all $Y\in\mathcal{F}(S_\N)$ such that $\pi_a(Y)=\pi_a(f(w))$, $\M_{\N},Y\models\varphi$. 
Let 
$w'\in[w]_a$.
Then, since $[w']_{a} = [w]_{a}$, we must have
$\pi_a(f(w))=\pi_a(f(w'))$, so $\M_{\N},f(w')\models\varphi$. By the inductive hypothesis, then, $\N,w'\models\varphi$, and since $w'$ was chosen arbitrarily, we conclude $\N,w\models K_a\varphi$.

\end{proof}

The existing literature has largely taken properness to be a technical condition, introduced in order to facilitate translations from relational to simplicial models.
\cite{KaSC,DoA,Death,SimpDEL}
Whether it is a \textit{reasonable} restriction, in the sense of being epistemically motivated or defensible, is debatable (and actively debated by the authors of this paper!).\footnote{One could argue that it eliminates a kind of ``redundancy of perspectives'', though this depends on a further philosophical supposition that every relevant fact about a world is captured in some agent's perspective.} What is less contentious is that once we incorporate belief into the models, properness is far too strong a requirement. A concrete example will help to illustrate this.




A \defin{relational model for introspective knowledge and belief (over $Ag$)} is just a relational model for introspective knowledge supplemented with new relations $Q_{a} \subseteq R_{a}$, one for each $a \in Ag$; we require that each $Q_{a}$ be serial and constant on $R_{a}$-equivalence classes: that is, if $w R_{a} w'$ then $Q_{a}(w) = Q_{a}(w')$. These new relations are used to interpret belief modalities in the language $\L_{KB}(Ag)$ in the usual way, namely:
\begin{align*}
	w & \models B_a \phi \text{ iff } \forall w' \in Q_{a}(w) (w' \models \phi).
\end{align*}
Under these semantics, the properties of $Q_{a}$ listed above ensure that our models validate knowledge implies belief ($K_{a} \phi \lthen B_{a} \phi$), consistency of belief ($B_{a} \phi \lthen \lnot B_{a} \lnot \phi$), strong positive introspection for belief ($B_{a} \phi \lthen K_{a} B_{a} \phi$), and strong negative introspection for belief ($\lnot B_{a} \phi \lthen K_{a} \lnot B_{a} \phi)$.


\commentout{
	We can give a simplicial model under this definition like the following:
	
	$$\begin{tikzcd}[scale cd=1.5]
		\color{green}c_1\color{black}(\neg P) \arrow[red, rdd, no head] \arrow[red, rr, no head]  \arrow[blue, rdd, no head, bend left] &                                                                                             & \color{red}a_0 \arrow[red, ldd, no head] \arrow[red, rdd, no head] \arrow[red, rr, no head]  \arrow[green, rr, no head, bend left]  \arrow[blue, ll, no head, bend left] &                                                                                                                                          & \color{blue}b_1\color{black}(P) \arrow[red, ldd, no head] \\
		&                                                                                             &                                                                                                                                                                                                 &                                                                                                                                          &                          \\
		& \color{blue}b_0 \arrow[rr, no head] \arrow[blue, ruu, no head, bend left]  &                                                                                                                                                                                                 & \color{green}c_0  \arrow[green, ruu, no head, bend right]  \arrow[green, luu, no head, bend left] &                         
	\end{tikzcd}$$
	
	Whose Kripke equivalent is the following
	
	$$
	\begin{tikzcd}[scale cd=1.5]
		& w_0 \arrow[red, ldd] \arrow[red, rdd] \arrow[blue, ldd, bend right] \arrow[green, rdd, bend left] &                                                                         \\
		&                                                                            &                                                                         \\
		w_1(\neg P) \arrow[red, rr, no head]  &                                                                            & w_2(P)
	\end{tikzcd}$$
	
	The translation is as before. However, the Kripke frame in this case is not serial. One way to make it serial is to draw a green reflexive edge on $w_1$ and a blue reflexive edge on $w_2$, which yields the previous Kripke model, which had a simplicial translation. There is however another way to make this model serial:
}

Consider the model depicted in Figure \ref{fgr:3ag} for three agents $Ag = \{a,b,c\}$; the red, blue, and green arrows display the relations $Q_{a}$, $Q_{b}$, and $Q_{c}$, respectively. In particular, at all worlds $w$, agent $a$ considers only $w_{1}$ and $w_{2}$ possible, agent $b$ is sure that the true world is $w_{1}$, and agent $c$ is sure that the true world is $w_{2}$. Note that we have not explicitly depicted the relations $R_{a}$, $R_{b}$, or $R_{c}$, but because each of these must be an equivalence relation containing the corresponding belief relation, there is only one possibility: they are all the complete relation. Thus, the corresponding relational model for introspective knowledge is not proper and so cannot be translated into a simplicial model.

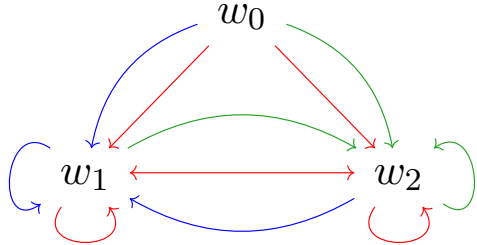
\begin{figure}[htbp]
	\begin{center}
		\begin{tikzcd}[scale cd=1.5]
			& w_0 \arrow[red, ldd] \arrow[red, rdd] \arrow[blue, ldd, bend right] \arrow[green, rdd, bend left] &                                                                                                                  \\
			&                                                                            &                                                                                                                  \\
			w_1 \arrow[red, rr, leftrightarrow] \arrow[green, rr, bend left] \arrow[red, loop, distance=2em, in=305, out=235] \arrow[blue, loop, distance=2em, in=215, out=145] &                                                                            & w_2 \arrow[blue, ll, bend left] \arrow[red, loop, distance=2em, in=305, out=235] \arrow[green, loop, distance=2em, in=35, out=325]
		\end{tikzcd}
		\caption{A 3-agent relational frame for introspective knowledge and belief} \label{fgr:3ag}
	\end{center}
\end{figure}\label{Figure1}

However, there does not seem to be anything trivial or redundant about this model of belief; as such, an inability to translate it into the simplicial setting is a substantive restriction. Fortunately, while this model itself cannot be so translated, there does exist an equivalent (in the sense of bisimilarity) model that can be. This follows from
the results of the next section.

\section{Properness} \label{sec:prop}

The results of this section are relevant outside the context of the simplicial semantics for belief that is the central project of this paper; as such, we present them here first for simplicial models generally and only subsequently extend them to simplicial belief models.

Let $\N = (W,(R_{a})_{a \in Ag},v)$ be a relational model with $Ag = \{1, \ldots, n\}$; in this section we do \textit{not} assume that the relations $R_{i}$ are equivalence relations (or have any properties). As defined above, $\N$ is called \defin{proper} if it contains no pair of distinct points $w,w' \in W$ such that for all $i \in Ag$, $wR_{i}w'$. As noted, this properness assumption has been deployed repeatedly in the literature to facilitate translation from relational models to simplicial models \cite{KaSC,DoA,Death,SimpDEL}, but to the best of our knowledge there has been no systematic study of just how restrictive it is. In this section we show that in fact properness is not restrictive at all: \textit{every} relational model is equivalent (via bisimulation) to a proper relational model.%
\footnote{The only other formal discussion of this topic that we are aware of occurs in \cite{Death}, where the translation of a particular ``canonical'' Kripke model, which is not itself proper, is conjectured to be bisimilar to a proper Kripke model using a broad ``unwinding'' method. Unfortunately, this technique would not, in general, preserve properties like transitivity and symmetry, limiting its usefulness in the context of simplicial semantics, where these properties are typically necessary.}

We begin with the case where $W$ is finite; say $W = \{w_{1}, \ldots, w_{m}\}$. We will construct a new relational model $\tilde{\N} = (\tilde{W}, (\tilde{R}_{i})_{i=1}^{n}, \tilde{v})$, prove that it is proper, and exhibit a surjective, bounded morphism from $\tilde{\N}$ to $\N$.\footnote{For a review of bounded morphisms, see for example \cite{Bjorndahl2024}.}

Set $\tilde{W} = W \times W$; it is helpful to picture this new state space as consisting of $m$ disjoint copies of the original set $W$, namely,
$$\tilde{W} = W \times \{w_{1}\} \cup \cdots \cup W \times \{w_{m}\},$$
so the point $(w_{j}, w_{k}) \in \tilde{W}$ may be thought of as the $j$th element of the $k$th copy of $W$. Accordingly, we define
$$\tilde{v}(P) = \{(w_{j},w_{k}) \in \tilde{W} \: : \: w_{j} \in v(P)\};$$
in other words, the primitive propositions true at $(w_{j},w_{k})$ in $\tilde{\N}$ are precisely those that are true at $w_{j}$ in $\N$.

Lastly we must define the relations, and we do so with the idea of preserving this correspondence in truth between $(\tilde{\N}, (w_{j}, w_{k}))$ and $(\N, w_{j})$ (indeed, we will ultimately show that projection to the first coordinate is the promised bounded morphism). For $i > 1$, define
$$(w_{j},w_{k}) \tilde{R}_{i} (w_{j'},w_{k'}) \textrm{ iff $k = k'$ and $w_{j} R_{i} w_{j'}$}.$$
Notice that if we imposed this same definition for $i=1$, then $\tilde{\N}$ would simply be the disjoint union of $m$ copies of $\N$. Instead, we will define $\tilde{R}_{1}$ using a different partition of $\tilde{W}$:
let $=_m$ denote equality modulo $m$, and for each $\ell \in \{0, \ldots, m-1\}$, let
$$\tilde{W}_{\ell} = \{(w_{j}, w_{k}) \in \tilde{W} \: : \: k - j =_m \ell\},$$
Thus, $\tilde{W}_{0} = \{(w_{1}, w_{1}), (w_{2}, w_{2}), \ldots, (w_{m}, w_{m})\}$ is the ``diagonal'', $\tilde{W}_{1} = \{(w_{1}, w_{2}), (w_{2}, w_{3}), \ldots, (w_{m}, w_{1})\}$, and so on. It is then easy to check the following.

\begin{proposition}
The collection $\{\tilde{W}_{\ell} \: : \: 0 \leq \ell \leq m-1\}$ partitions $\tilde{W}$. Moreover, for each $\tilde{W}_{\ell}$, projection to the first component is a bijection between $\tilde{W}_{\ell}$ and $W$.
\end{proposition}
\begin{proof}
	Suppose $(w_{j},w_{k}) \in \tilde{W}_{\ell} \cap \tilde{W}_{\ell'}$
Then $\ell=_mk-j=_m\ell'$. Since $\ell,\ell'\in\{0,\ldots,m-1\}$, we have that $\ell=\ell'$.
This shows that the $\tilde{W}_{\ell}$ are mutually disjoint.
Moreover, given any $(w_{j},w_{k}) \in \tilde{W}$,
clearly $(w_{j},w_{k})\in\tilde{W}_{\ell}$ for $\ell =_m k-j$; thus,
$\{\tilde{W}_{\ell} \: : \: 0 \leq \ell \leq m-1\}$ partitions $\tilde{W}$.
	
	Define $\rho \colon \tilde{W} \to W$ by $\rho(w_{j},w_{k}) = w_{j}$, namely, projection to the first component.
	Let $(w_{j},w_{k}), (w_{j'},w_{k'}) \in \tilde{W}_{\ell}$, and suppose $\rho(w_{j},w_{k}) = \rho(w_{j'},w_{k'})$, so $w_{j}=w_{j'}$ (which means that $j = j'$). Then, since $k-j =_m \ell =_m k'-j'$, we have that $k=k'$ and so $w_{k}=w_{k'}$. This shows $\rho$ is injective when restricted to any $\tilde{W}_{\ell}$. Next, given $w_j \in W$, let $k \in \{1, \ldots, m\}$ be such that $k =_m j+\ell$; then by definition $(w_{j},w_{k}) \in \tilde{W}_{\ell}$ and $\rho(w_{j},w_{k}) = w_j$, which shows that $\rho$ is surjective when restricted to any $\tilde{W}_{\ell}$.
\end{proof}

Having established this new way of breaking $\tilde{W}$ into copies of $W$, we define
$$(w_{j},w_{k}) \tilde{R}_{1} (w_{j'},w_{k'}) \textrm{ iff $k - j =_m k' - j'$ and $w_{j} R_{1} w_{j'}$}.$$
So, like the other relations, $\tilde{R}_{1}$ is also $m$ disjoint copies of $R_{1}$, but skewed across a different partition: one copy on each set $\tilde{W}_{\ell}$.

\begin{proposition}
$\tilde{\N}$ is proper.
\end{proposition}
\begin{proof}
	Suppose $(w_{j},w_{k}) \tilde{R}_{i} (w_{j'},w_{k'})$ for each $i \in \{1, \ldots, n\}$. Taking $i=1$, we know that $k-j=_mk'-j'$. Taking $i>1$, we know that $k=k'$. So, $j=j'$, and therefore $(w_{j},w_{k})=(w_{j'},w_{k'})$.
\end{proof}

\begin{proposition}
Projection to the first coordinate is a surjective, bounded morphism from $\tilde{\N}$ to $\N$.
\end{proposition}
\begin{proof}
	We have already seen that $\rho$ is surjective, and it
	is clear that $\rho$ also preserves the truth values of atomic formulas. To show $\rho$ is a bounded morphism,
	we need to establish the ``back'' and ``forth'' conditions for the relations $R_{i}$,
	for which the only interesting case is $i=1$.
	
	First, the ``back'' condition:
	let $(w_{j},w_{k}) \in \tilde{W}$ and $w_{j'}\in W$, and suppose $\rho(w_{j},w_{k})R_1w_{j'}$. Fix
	$\ell$
	such that $(w_{j},w_{k}) \in \tilde{W}_{\ell}$.
	Choose $k' \in \{1, \ldots, m\}$ such that $k' =_m j' + \ell$, and consider $(w_{j'},w_{k'})$. We have that $k'-j' =_m \ell$, so $(w_{j'},w_{k'}) \in \tilde{W}_{\ell}$. It follows that $(w_{j},w_{k})\tilde{R}_{1}(w_{j'},w_{k'})$ and of course $\rho(w_{j'},w_{k'}) = w_{j'}$, as desired.
	
	For the ``forth'' condition, suppose that $(w_{j},w_{k})\tilde{R}_{1}(w_{j'},w_{k'})$; then it is immediate from the definition of $\tilde{R}_{1}$ that $w_{j}R_{1}w_{j'}$.
\end{proof}

The construction for countable $W$ is quite analogous, and perhaps even more straightforward since there is no need to appeal to modular arithmetic. In this case we have a bijection $g \colon W \to \mathbb{Z}$. Define $\tilde{\N} = (\tilde{W}, (\tilde{R}_{i})_{i=1}^{n}, \tilde{v})$ by setting
\begin{itemize}
\item
$\tilde{W} = W \times W$;
\item
$\tilde{v}(P) = \{(x,y) \in \tilde{W} \: : \: x \in v(P)\}$;
\item
for $i>1$,
$$(x,y) \tilde{R}_{i} (x',y') \textrm{ iff $y=y'$ and $x R_{i} x'$},$$
and
$$(x,y) \tilde{R}_{1} (x',y') \textrm{ iff $g(y) - g(x) = g(y') - g(x')$ and $x R_{1} x'$}.$$
\end{itemize}

Once again, we have: 
\begin{proposition}
$\tilde{\N}$ is proper, and projection to the first coordinate is a surjective, bounded morphism from $\tilde{\N}$ to $\N$.
\end{proposition}
\begin{proof}
For each $\ell \in \mathbb{Z}$, 
define $\tilde{W}_{\ell}  = \{(x,y) \in \tilde{W} \: : \: g(y) - g(x) = \ell\}$. Analogous to
the above, it is easy to see that the collection $\{\tilde{W}_{\ell} \: : \: \ell \in \mathbb{Z}\}$ is a partition of $\tilde{W}$, and moreover that $\rho \colon \tilde{W}_{\ell} \to W$ is a bijection for each $\ell$.

The proof of properness is also completely parallel to the finite case:
suppose
$(x,y) \tilde{R}_{i} (x',y')$ for each
$i \in \{1, \ldots, n\}$.
Taking $i=1$, we know that
$g(y) - g(x) = g(y') = g(x')$;
taking $i>1$, we know that
$y = y'$.
It follows that $g(x) = g(x')$, so $x = x'$ and $(x,y) = (x',y')$.

It is clear as before that $\rho$ is surjective and preserves the truth values of atomic formulas. 
Thus, to show that it is a bounded morphism, we just need to establish the ``back'' and ``forth'' conditions for the relations $R_{i}$; once again,
the only interesting case is $i=1$.

As above, ``forth'' is immediate from the definition of $\tilde{R}_{1}$. For ``back'', let $(x,y) \in \tilde{W}$ and suppose that $x R_{1} x'$. Let $y' = g^{-1}(g(y) - g(x) + g(x'))$; then $g(y') - g(x') = g(y) - g(x)$, so by definition we have $(x,y) \tilde{R}_{1} (x',y')$ and of course $\rho(x',y') = x'$, as desired.
\end{proof}

This translation can also be extended to continuum-sized models in the obvious way: namely, by replacing $\mathbb{Z}$ with $\mathbb{R}$ in the proof above (literally every other part of the proof remains the same). This is useful, for example, if one wishes to apply the transformation to a canonical model, which is typically of size continuum.\footnote{For models of other cardinalities, we could again duplicate the proof given above if we had a structure like $\mathbb{Z}$ or $\mathbb{R}$ but with the needed cardinality.
And indeed, such a structure is guaranteed to exist by the L\"owenheim-Skolem theorem.
In fact,
we only really need
to apply the L\"owenheim-Skolem theorem to the group axioms, and not $Th(\mathbb{R})$;
the proof above uses only the existence of additive inverses and the fact that $a-b=a-c$ implies $b=c$.}

Finally, we observe that many properties of the relations $R_{i}$ are preserved by this translation: for example, it is easy to see that if $R_{i}$ is reflexive, symmetric, transitive, serial, or Euclidean, then $\tilde{R}_{i}$ is as well---in all cases the proof follows immediately from the fact that $\tilde{R}_{i}$ can be represented as a disjoint union of ``copies'' of $R_{i}$. Thus we have the following summary result:

\begin{theorem} \label{thm:note}
	Let $\N$ be a relational model. Then there is a relational model $\tilde{\N}$ which is proper and a surjective, bounded morphism from $\tilde{\N}$ to $\N$. Furthermore, if $\N$ is
	reflexive (respectively: symmetric, transitive, serial, Euclidean)
	then $\tilde{\N}$ is
	also reflexive (respectively: symmetric, transitive, serial, Euclidean).
\end{theorem}

With that established, we return to our consideration of belief.
Roughly speaking, by expanding this result to apply also to belief, we can
translate the model in Figure \ref{fgr:3ag} into a
proper model that ``simulates'' it via an appropriate surjective, bounded morphism.
This will yield
the model depicted in Figure \ref{fgr:3agprop}.

\begin{figure}[htbp]
	\begin{center}
		\begin{tikzcd}[scale cd=0.5]
			&                                                                                           &                                                                                                                          &                                                                                                                                                  & {(w_0,w_0)} \arrow[blue, ldd, bend right] \arrow[green, rdd, bend left] \arrow[red, lllldddddd, bend right] \arrow[red, rrrrdddddd, bend left] &                                                                                                                                                  &                                                                                                                           &                                                                                           &                                                                                                                          \\
			&                                                                                           &                                                                                                                          &                                                                                                                                                  &                                                                                                                         &                                                                                                                                                  &                                                                                                                           &                                                                                           &                                                                                                                          \\
			&                                                                                           &                                                                                                                          & {(w_1,w_0)} \arrow[green, rr, bend left] \arrow[red, loop, distance=2em, in=305, out=235] \arrow[blue, loop, distance=2em, in=215, out=145] \arrow[red, ldddd, leftrightarrow] &                                                                                                                         & {(w_2,w_0)} \arrow[blue, ll, bend left] \arrow[green, loop, distance=2em, in=35, out=325] \arrow[red, loop, distance=2em, in=305, out=235] \arrow[red, rdddd, leftrightarrow] &                                                                                                                           &                                                                                           &                                                                                                                          \\
			&                                                                                           &                                                                                                                          &                                                                                                                                                  &                                                                                                                         &                                                                                                                                                  &                                                                                                                           &                                                                                           &                                                                                                                          \\
			& {(w_0,w_1)} \arrow[blue, ldd, bend right] \arrow[green, rdd, bend left] \arrow[red, rrrruu] \arrow[red, rrrrrdd] &                                                                                                                          &                                                                                                                                                  &                                                                                                                         &                                                                                                                                                  &                                                                                                                           & {(w_0,w_2)} \arrow[blue, ldd, bend right] \arrow[green, rdd, bend left] \arrow[red, lllluu] \arrow[red, llllldd] &                                                                                                                          \\
			&                                                                                           &                                                                                                                          &                                                                                                                                                  &                                                                                                                         &                                                                                                                                                  &                                                                                                                           &                                                                                           &                                                                                                                          \\
			{(w_1,w_1)} \arrow[green, rr, bend left] \arrow[red, loop, distance=2em, in=305, out=235] \arrow[blue, loop, distance=2em, in=215, out=145] \arrow[red, rrrrrrrr, leftrightarrow, bend right] &                                                                                           & {(w_2,w_1)} \arrow[blue, ll, bend left] \arrow[red, loop, distance=2em, in=305, out=235] \arrow[green, loop, distance=2em, in=35, out=325] &                                                                                                                                                  &                                                                                                                         &                                                                                                                                                  & {(w_1,w_2)} \arrow[green, rr, bend left] \arrow[red, loop, distance=2em, in=305, out=235] \arrow[blue, loop, distance=2em, in=215, out=145] &                                                                                           & {(w_2,w_2)} \arrow[blue, ll, bend left] \arrow[red, loop, distance=2em, in=305, out=235] \arrow[green, loop, distance=2em, in=35, out=325]
		\end{tikzcd}
		\caption{A translation of the model from Figure \ref{fgr:3ag} into a proper model}
		\label{fgr:3agprop}
	\end{center}
\end{figure}
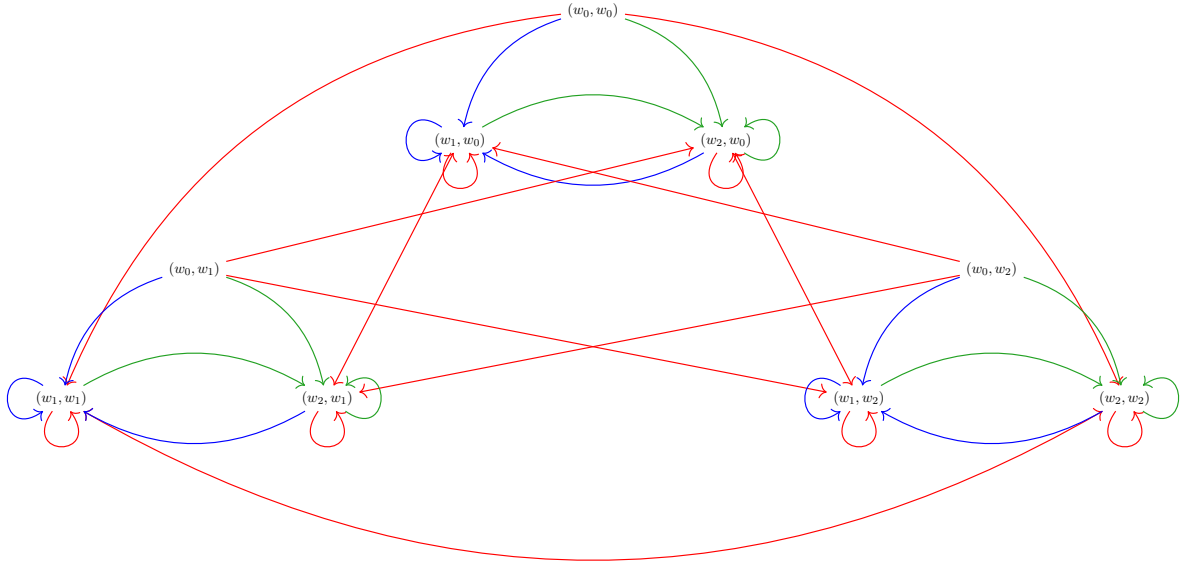\label{fgr:tra}


Because
the relational model in
Figure \ref{fgr:3agprop} is proper, we are able to transform it into a simplicial model
using (a slight generalization of)
Theorem \ref{SimpTrans}. The result of this transformation is shown in Figure \ref{fgr:3agsimp}.
In the next section we
fill in the details---namely, generalizing Theorems \ref{thm:note} and \ref{SimpTrans} to relational models for knowledge \textit{and belief}---%
and use these results to establish a novel completeness result.

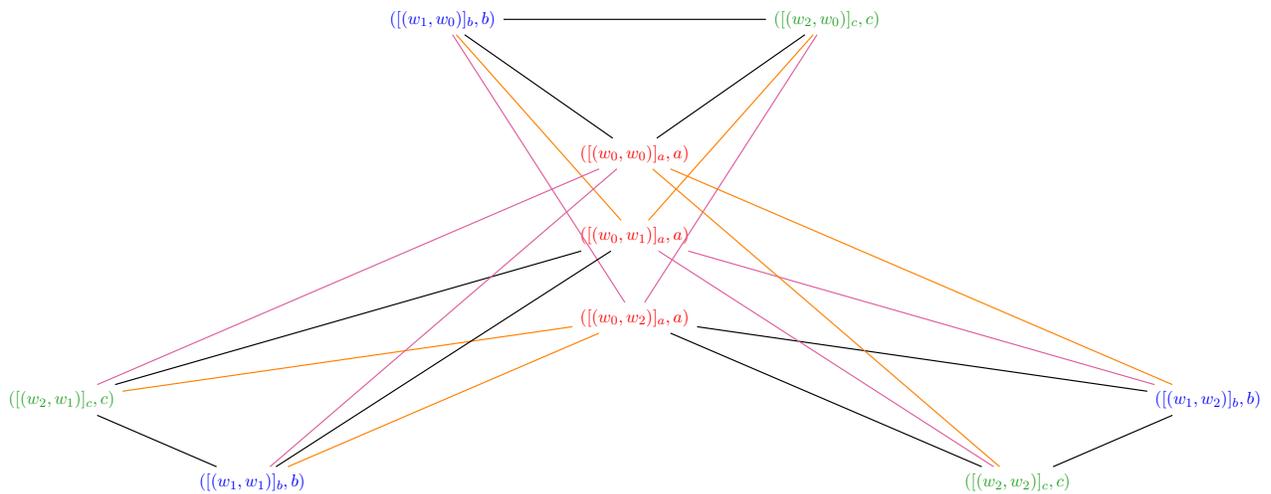
\begin{figure}[htbp]
	\begin{center}
		\begin{tikzcd}[scale cd=0.6]
			&                                                                                                                          & {\color{blue}([(w_1,w_0)]_b,b)} \arrow[rr, no head] \arrow[rdd, no head] \arrow[yellow, rddd, no head] \arrow[violet, rdddd, no head] &                                & {\color{green}([(w_2,w_0)]_c,c)} \arrow[ldd, no head] \arrow[yellow, lddd, no head] \arrow[violet, ldddd, no head] &                                                                                                       &                                                                                                                          \\
			&                                                                                                                          &                                                                                                                       &                                &                                                                                                    &                                                                                                       &                                                                                                                          \\
			&                                                                                                                          &                                                                                                                       & {\color{red}([(w_0,w_0)]_a,a)} &                                                                                                    &                                                                                                       &                                                                                                                          \\
			&                                                                                                                          &                                                                                                                       & {\color{red}([(w_0,w_1)]_a,a)} &                                                                                                    &                                                                                                       &                                                                                                                          \\
			&                                                                                                                          &                                                                                                                       & {\color{red}([(w_0,w_2)]_a,a)} &                                                                                                    &                                                                                                       &                                                                                                                          \\
			{\color{green}([(w_2,w_1)]_c,c)} \arrow[violet, rrruuu, no head] \arrow[rrruu, no head] \arrow[yellow, rrru, no head] &                                                                                                                          &                                                                                                                       &                                &                                                                                                    &                                                                                                       & {\color{blue}([(w_1,w_2)]_b,b)} \arrow[ld, no head] \arrow[yellow, llluuu, no head] \arrow[violet, llluu, no head] \arrow[lllu, no head] \\
			& {\color{blue}([(w_1,w_1)]_b,b)} \arrow[lu, no head] \arrow[violet, rruuuu, no head] \arrow[rruuu, no head] \arrow[yellow, rruu, no head] &                                                                                                                       &                                &                                                                                                    & {\color{green}([(w_2,w_2)]_c,c)} \arrow[yellow, lluuuu, no head] \arrow[violet, lluuu, no head] \arrow[lluu, no head] &                                                                                                                         
		\end{tikzcd}
		\caption{A translation of the model from Figure \ref{fgr:3agprop} into a simplicial model (violet facets belong to $S_a$ and $S_b$, yellow facets belong to $S_a$ and $S_c$, while black facets belong to only $S$)}
		\label{fgr:3agsimp}
	\end{center}
\end{figure}\label{Figure3}

Before turning to the technical details, it
is worth returning to reflect on the properness condition in light of this example.
Intuitively, the model shown in Figure \ref{fgr:3ag} encodes a single perspective for each agent: the set of doxastically accessible worlds is constant for each of $Q_{a}$, $Q_{b}$, and $Q_{c}$, after all.
However, with only one perspective for every agent, simplicial semantics only has the resources to produce a single facet, which is not enough to capture this scenario.
As such, any technique for translating relational models into simplicial models must sometimes proliferate agential perspectives; indeed, Figure \ref{fgr:3agprop} has 9 worlds instead of 3 and intuitively includes 3 ``perspectives'' for each agent, each of which is a ``copy'' of that agent's original, single perspective (up to formula satisfaction). 
In Section \ref{sec:Conc}, we discuss how the more general notion of \emph{simplicial sets} may be a useful framework for avoiding this redundancy of representation.

%
%
%
%
%
%
%
%
%
%

\section{Soundness and Completeness} \label{sec:sac}

Recall that relational models for introspective knowledge and belief validate knowledge implies belief ($K_{a} \phi \lthen B_{a} \phi$),
strong positive introspection
($B_{a} \phi \lthen K_{a} B_{a} \phi$), and strong negative introspection
($\lnot B_{a} \phi \lthen K_{a} \lnot B_{a} \phi)$.
\commentout{
	In conjunction with propositional tautologies, the \textbf{S5} axioms for
	knowledge, the \textbf{KD45} axioms for belief,
	and both belief and knowledge necessitation as well as modus ponens as rules of inference, we call this collection of axioms
	\textbf{FULL}, and the deduction system using them $\vdash_{\textbf{FULL}}$.
}%
Let $\full$ be the axiom system obtained by adding these three axiom schemes to the system \textsf{S5} for knowledge combined with \textsf{KD45} for belief.\footnote{For an overview of these standard axiom systems, see for example \cite{Bjorndahl2024}.}

\begin{theorem}\label{Thm:Comp}
	$\full$ is sound and complete with respect to the class of all simplicial belief models.
\end{theorem}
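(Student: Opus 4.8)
The plan is to prove soundness directly, by a semantic verification, and to obtain completeness by composing a standard relational completeness theorem with the two model transformations introduced above, each generalized from knowledge to knowledge-and-belief.

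For soundness, I would check that every axiom of $\full$ is valid on an arbitrary simplicial belief model $\M$ and that the inference rules preserve validity. The \textsf{S5} principles for $K_a$ and the \textsf{KD45} principles for $B_a$ reduce to the observation that, for each agent $a$, the binary relation ``$\pi_a(X) = \pi_a(Y)$'' is an equivalence relation on $\F(S)$ and, restricted to $\F(S_a)$, is transitive and Euclidean; the bridge axiom $K_a\phi \lthen B_a\phi$ follows immediately from $\F(S_a) \subseteq \F(S)$, and the two strong introspection axioms follow because the truth value of $B_a\phi$ at $X$ depends only on $\pi_a(X)$, which is exactly the datum held fixed by the $K_a$-quantifier. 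The one case needing care is the consistency axiom $B_a\phi \lthen \lnot B_a \lnot\phi$ (equivalently $\lnot B_a\bot$): its validity requires that from every facet $X \in \F(S)$ there be some $Y \in \F(S_a)$ with $\pi_a(Y) = \pi_a(X)$, i.e. that the induced relation on $\F(S_a)$ be serial, which is where the nonemptiness of each $S_a$ is used.

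For completeness, I would first invoke the standard fact that $\full$ is sound and complete with respect to the class of \emph{relational} models for introspective knowledge and belief: the canonical model for this multimodal logic has each $R_a$ an equivalence relation (from \textsf{S5}), each $Q_a \subseteq R_a$ (from $K_a\phi\lthen B_a\phi$), each $Q_a$ serial (from the $D$ axiom), and each $Q_a$ constant on $R_a$-classes (from the strong introspection axioms); this is a routine canonical-model argument (cf. \cite{Bjorndahl2024}). Thus any $\full$-consistent formula $\phi$ is satisfied at a world of some relational model $\N$ for introspective knowledge and belief. I would then apply a belief-enriched version of Theorem \ref{thm:note} to obtain a \emph{proper} such model $\tilde\N$ together with a surjective bounded morphism $g \colon \tilde\N \to \N$; since bounded morphisms preserve the truth of all formulas, $\phi$ is satisfied in $\tilde\N$ as well. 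Finally I would apply a belief-enriched version of Theorem \ref{SimpTrans} to translate $\tilde\N$ into a simplicial belief model $\M_{\tilde\N}$ in a truth-preserving way, so that $\phi$ is satisfiable over the class of simplicial belief models. Together with soundness this yields Theorem \ref{Thm:Comp}.

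The two generalizations are where the real work lies, and I expect the translation of belief (the extension of Theorem \ref{SimpTrans}) to be the crux. Given a proper relational model for introspective knowledge and belief, I would keep the node set, coloring, and background complex $S_{\N}$ exactly as in the knowledge translation, and for each agent $a$ define $S_a$ to be the subcomplex generated by the facets $\{f(w') : w' \in Q_a(u) \text{ for some } u \in W\}$; nonemptiness follows from seriality of $Q_a$, and UCF is inherited since each $f(w')$ is a full-dimensional facet of $S_{\N}$. The heart of the matter is then the inductive step for $B_a$: one must show that the facets $Y \in \F(S_a)$ with $\pi_a(Y) = \pi_a(f(w))$ are exactly $\{f(w') : w' \in Q_a(w)\}$. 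The nontrivial inclusion uses $Q_a \subseteq R_a$, transitivity of $R_a$, and the constancy of $Q_a$ on $R_a$-classes together (if $w' \in Q_a(u)$ and $w' R_a w$, then $u R_a w$, so $Q_a(u) = Q_a(w)$ and hence $w' \in Q_a(w)$), after which properness and the induction hypothesis finish the case. For the extension of Theorem \ref{thm:note}, I would take the proper model produced in \cite{Note1} for the knowledge relations and define the lifted belief relations $\tilde Q_a$ so that $g$ remains a bounded morphism for each $Q_a$; the work is in checking that $\tilde Q_a \subseteq \tilde R_a$, that $\tilde Q_a$ is serial and constant on $\tilde R_a$-classes, and that properness survives the addition of these relations. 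I would expect the back-and-forth (bounded morphism) verification for the belief relations, carried out simultaneously with preservation of properness, to be the main obstacle; by contrast, soundness and the relational completeness step are essentially bookkeeping.
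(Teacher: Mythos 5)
Your completeness route is essentially the paper's own: a canonical relational model for $\full$, a belief-enriched version of Theorem \ref{thm:note} to obtain a proper model with a surjective bounded morphism, and a belief-enriched version of Theorem \ref{SimpTrans} to land in a simplicial belief model. Your definition of the belief subcomplexes coincides with the paper's: since $Q_a \subseteq R_a$ and $Q_a$ is constant on $R_a$-classes, the image $\bigcup_{u} Q_a(u)$ is exactly the set of $Q_a$-reflexive worlds, which is how the paper defines $\F(S_{\tilde{\N},a}) = \{F : f^{-1}(F)\, \tilde{Q}_a\, f^{-1}(F)\}$; and your exact characterization $\{Y \in \F(S_a) : \pi_a(Y) = \pi_a(f(w))\} = \{f(w') : w' \in Q_a(w)\}$ is precisely the content of the paper's two-directional inductive step for $B_a$ (the paper gets $u Q_a u$ from the Euclidean property of $Q_a$, you from constancy---the same fact). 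For the properization, the paper makes the construction concrete ($\tilde{W} = W^2$ with one distinguished agent's relations ``skewed'' via a bijection $W \to \mathbb{R}$, projection as the bounded morphism); note that properness is a condition on the $\tilde{R}_a$ alone, so it is inherited verbatim from \cite{Note1} and your worry about it ``surviving'' the addition of the $\tilde{Q}_a$ is vacuous.

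The one genuine misstep is in your soundness argument for the consistency axiom. You correctly isolate the semantic condition that validity of $B_a\phi \lthen \lnot B_a \lnot\phi$ requires---for every $X \in \F(S)$ there must exist $Y \in \F(S_a)$ with $\pi_a(Y) = \pi_a(X)$---but you then assert this follows from nonemptiness of $S_a$, and it does not: nonemptiness gives \emph{some} facet in $S_a$, not one sharing the $a$-vertex of every facet of $S$. Concretely, take two agents with $S$ generated by facets $\{a_0,b_0\}$ and $\{a_1,b_1\}$ and let $S_a$ be generated by $\{a_1,b_1\}$ alone; at $\{a_0,b_0\}$ the quantifier for $B_a$ is vacuous, so $B_a\bot$ holds and the \textsf{D} axiom fails. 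The needed property---every $a$-colored vertex lying in a facet of $S$ lies in a facet of $S_a$---must be built into the intended model class; the paper glosses soundness as ``straightforward'' and its completeness construction only ever produces models with this stronger property (seriality and constancy of $Q_a$ give, for each $w$, some $u \in Q_a(w)$ with $u Q_a u$ and $u R_a w$), so the lacuna is latent in the paper as well, but as written your derivation of \textsf{D} from nonemptiness alone would fail.
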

\begin{proof}
	
	We prove the result in much the same way as has been done for comparable results in previous literature.
	\cite{KaSC,SimpDEL}
	Soundness is straightforward so we focus on completeness.
	We first generate a canonical relational model for introspective knowledge and belief, using maximal
	\full-consistent sets in the usual way.
	Next we transform this canonical model into an equivalent, proper model using a generalization of
	Theorem \ref{thm:note}.
	Finally, we translate this proper model into a simplicial belief model using essentially the same technique as in Section \ref{sec:con}.
	
	Let $W$ be the collection of maximal
	\full-consistent
	sets. For arbitrary $a\in Ag$, we define $R_a$ as follows: $wR_aw'$ if and only if for all formulas $\varphi$, if $K_{a}\varphi\in w$, then $\varphi\in w'$. Similarly, let $Q_a$ be
	given by $wQ_aw'$ if and only if for all formulas $\varphi$, if $B_{a}\varphi\in w$, then $\varphi\in w'$.
	Lastly, define
	$v(P)=\{w \in W \: : \: P \in w\}$.
	Set $\N = (W,(R_a)_{a\in Ag},(Q_a)_{a\in Ag}, v)$.
	
	It is well known that with the above definitions for relations, and the fact that the \textsf{S5} axioms for knowledge are included in $\full$, each $R_a$ is an equivalence relation.
	Indeed, it is straightforward to check that $\N$ is a relational model for introspective knowledge and belief. For example, the fact that $K_a\varphi\rightarrow B_a\varphi \in \full$ forces $Q_{a} \subseteq R_{a}$.
	We will show
	here in detail
	that each $Q_a$ is constant on
	$R_{a}$-equivalence classes, and leave the remaining, easy checks to the reader.
	Suppose $wR_aw'$. Suppose further that $wQ_au$, and $B_{a}\varphi\in w'$.
	From strong positive introspection it follows
	that $K_aB_a\varphi\in w'$. Since $wR_aw'$, we have that $w'R_aw$, and therefore $B_a\varphi\in w$. Since $wQ_au$, it follows that $\varphi\in u$.
	Since $\phi$ was arbitrary, this shows that $w' Q_{a} u$, which
	suffices to show that $Q_a$ is constant on $R_a$-equivalence classes.
	
	It is easy to show, in the standard way, that $\N,w \models \varphi$ if and only if $\varphi \in w$.
	Thus $\N$ refutes every non-theorem of \full.
	Our next step is to find a \textit{proper} relational model for introspective knowledge and belief that does the same; for this we
	extend the results of Section \ref{sec:prop}.
	
	
	
	Since our language is countable, it follows that $W$ is of size continuum. Let $g \colon W \to \mathbb{R}$ be a bijection. Define $\tilde{W} = W^{2}$ and fix a distinguished $b \in Ag$; for all $a \neq b$, define
	\begin{eqnarray*}
		(w,u) \tilde{R}_{a} (w',u') & \textrm{iff} & u = u' \textrm{ and } wR_{a}w'\\
		(w,u) \tilde{Q}_{a} (w',u') & \textrm{iff} & u = u' \textrm{ and } wQ_{a}w'\\
	\end{eqnarray*}%
	and set
	\begin{eqnarray*}
		(w,u) \tilde{R}_{b} (w',u') & \textrm{iff} & g(u) - g(w) = g(u') - g(w) \textrm{ and } wR_{b}w'\\
		(w,u) \tilde{Q}_{b} (w',u') & \textrm{iff} & g(u) - g(w) = g(u') - g(w) \textrm{ and } wQ_{b}w'\\
	\end{eqnarray*}
	Finally, define $\tilde{v}(P) = \{(x,y) \in \tilde{W} \: : \: x \in v(P)\}$, and let $\tilde{\N} = (\tilde{W}, (\tilde{R}_{a})_{a \in Ag}, (\tilde{Q}_{a})_{a \in Ag}, \tilde{v})$.
	Thus, just as in Section \ref{sec:prop},
	we are making $|W|$-many copies of the model $\N$, but for one special agent, $b$, we ``skew'' their accessibility relations (both $R_{b}$ and $Q_{b}$) so they connect worlds across copies rather than within copies. This guarantees that (except for reflexive edges), agent $b$'s accessibility relations never coincide with any other agent's, which in turn ensures the model is proper.
	
	
	Since $Q_{a} \subseteq R_{a}$ for all $a \in Ag$, the definitions above guarantee that also $\tilde{Q}_{a} \subseteq \tilde{R}_{a}$ for all $a \in Ag$. It's also easy to check that for all $a \in Ag$, $\tilde{R}_{a}$ is an equivalence relation and $\tilde{Q}_{a}$ is serial and
	constant on $\tilde{R}_{a}$-equivalence classes.
	Theorem \ref{thm:note}
	now guarantees $\tilde{\N}$ is proper (with respect to the relations $\tilde{R}_{a}$)
	and moreover, that the map $\rho: \tilde{W} \to W$ given by projection to the first coordinate is a surjective, bounded morphism (again, with respect to the $\tilde{R}_{a}$ relations). But it is easy to check that $\rho$ is also a bounded morphism with respect to the $\tilde{Q}_{a}$ relations.
	It follows that every formula refuted in $\N$ is also refuted in $\tilde{\N}$, as desired.
	
	
	\commentout{
		Therefore, $\N,w\models\varphi$ if and only if $\tilde{\N},(w,u)\models\varphi$ so long as $\varphi$ contains no belief modalities. We will show that this is true even when $\varphi$ does contain belief modalities.
		
		\begin{lemma}
			Assume that for any $u\in W$, $\N,w\models\varphi$ if and only if $\tilde{\N},(w,u)\models\varphi$ for $\varphi$ of recursive depth at most $n$. Let $\varphi$ be a formula of depth $n$. Then for any $u\in W$, $\N,w\models B_a\varphi$ if and only if $\tilde{\N},(w,u)\models B_a\varphi$.
		\end{lemma}
		\begin{proof}
			Fix $u\in W$. Suppose $\N,w\models B_a\varphi$. Suppose $(w,u)\tilde{Q_a}(w',u')$. Then $wQ_aw'$, s $\N,w'\models\varphi$. By our assumption, $\tilde{N},(w',u')\models\varphi$, as desired.
			
			Suppose $\tilde{\N},(w,u)\models B_a\varphi$. First, suppose $a=1$. Then for any $u',w'\in W$ such that $g(u)-g(w)=g(u')-g(w')$ and $wQ_aw'$, $\tilde{\N},(w',u')\models\varphi$. By our assumption, $\N,w'\models\varphi$. Since for any $w'\in W$ such that $wQ_1w'$, $u'$ such that $g(u)-g(w)=g(u')-g(w')$ may be found, it follows that $\N,w\models B_a\varphi$. Second, suppose $a\neq 1$. Then for any $w'\in W$ such that $wQ_aw'$, $\tilde{\N},(w',u)\models\varphi$. By our assumption, $\N,w'\models\varphi$. It follows that $\N,w\models B_a\varphi$.
		\end{proof}
	}
	
	
	The final step is to transform $\tilde{\N}$ into an equivalent simplicial
	belief
	model. Because $\tilde{\N}$ is proper and each $\tilde{R}_{a}$ is an equivalence relation, we can define
	$N_{S_{\tilde{\N}}}$, $V_{S_{\tilde{\N}}}$, $S_{S_{\tilde{\N}}}$, and $L_{S_{\tilde{\N}}}$ as we did above,
	in the lead-up to
	Theorem \ref{SimpTrans}. What's missing are the belief subcomplexes, which we define as follows:
	$$\mathcal{F}(S_{\tilde{\N},a}) = \{F\in\mathcal{F}(S_{\tilde{\N}}) \: : \: f^{-1}(F) \tilde{Q}_{a} f^{-1}(F)\},$$
	recalling that $f \colon \tilde{W} \to \F(S_{\tilde{\N}})$ is the bijective correspondence between worlds in $\tilde{N}$ and facets in $S_{S_{\tilde{\N}}}$.
	Specifying the facets of $S_{\tilde{\N},a}$ uniquely determines the simplicial complex itself; it also guarantees that $S_{\tilde{\N},a}$ is a UCF subcomplex of $S_{\tilde{\N}}$. 
Moreover, fix some arbitrary a-colored node $n\in N_{S_{\tilde{\N}}}$. By definition of $N_{S_{\tilde{\N}}}$, $e(n)\neq\emptyset$. Fix $w\in e(n)$; because $\tilde{Q}_a$ is serial, there is $w'$ such that $w\tilde{Q}_a w'$. It follows that $w'\tilde{Q}_a w'$ and $w'\in e(n)$.
So, $n\in\{([w']_i,i)\}_{i\in Ag}\in\F(S_{\tilde{\N},a})$, as desired.
	
	Of course, given any proper relational model 
	for introspective knowledge and belief,
	$\N$, we can use the above
	definitions to produce a simplicial belief model $\M_{\N}$.
	We now extend Theorem \ref{SimpTrans} in this context to the full language including belief modalities.
	
	
	
	\begin{lemma}
		Let $\N$ be a proper relational model for introspective knowledge and belief (over $Ag$). For all formulas $\phi \in \L_{KB}(Ag)$, for all $w \in W$,
		$$\N,w \models \phi \textrm{ iff } \M_{\N}, f(w) \models \phi.$$
	\end{lemma}
	\begin{proof}
		The proof is identical to that of Theorem \ref{SimpTrans} except for the inductive step
		for the
		belief modalities,
		which must be added.
		So suppose inductively, as in that proof, that for all $w \in W$,
		$\N,w \models \phi \textrm{ iff } \M_{\N}, f(w) \models \phi$.
		We wish to show that for all $w \in W$, $\N,w \models B_{a}\phi \textrm{ iff } \M_{\N}, f(w) \models B_{a}\phi$.
		
		Fix $w\in W$ and suppose
		$\N,w\vDash B_a\varphi$.
		Let $F\in\mathcal{F}(S_{\N,a})$ be such that $\pi_a(F)=\pi_a(f(w))$.
		It follows that $f^{-1}(F)Q_af^{-1}(F)$ and
		$w R_{a} f^{-1}(F)$.
		Because $Q_a$ is constant on $R_a$ equivalence classes,
		$Q_{a}(w) = Q_{a}(f^{-1}(F))$;
		thus,
		$w Q_{a} f^{-1}(F)$,
		and so $\N,f^{-1}(F)\models\varphi$. By the inductive hypothesis, we have that $\M_{\N},F\models\varphi$. It follows that $\M_{\N},f(w)\models B_a\varphi$, as desired.
		
		Suppose now that $\M_{\N}, f(w) \models B_a\phi$.
		Let $u \in W$ be
		such that $wQ_au$.
		By the Euclidean property
		we have that $uQ_au$. It follows that $f(u)\in\mathcal{F}(S_{\N,a})$. Moreover, $wQ_au$ implies that $wR_au$, and so $\pi_a(f(u))=\pi_a(f(w))$. It follows that $\M_{\N},f(u)\models \varphi$. By the inductive hypothesis, we have that $\N,u\models\varphi$. Therefore, $\N,w\models B_a\varphi$, as desired.
	\end{proof}
	
	\commentout{\begin{lemma}
			Fix $w\in\tilde{W}$. Assume that, for any formula $\varphi$ of recursive depth $n$, we have that $\tilde{\N},w\models\varphi$ if and only if $\M_{\tilde{\N}},f(w)\models\varphi$. Then $\tilde{\N},w\models B_a\varphi$ if and only if $\M_{\tilde{\N}},f(w)\models B_a\varphi$
		\end{lemma}
		\begin{proof}
			Suppose $\tilde{\N},w\models B_a\varphi$. Fix $F\in S_{a\tilde{\N}}$ such that $\pi_a(f(w))=\pi_a(F)$. We know by definition that $e(\pi_a(f(w)))=[w]_a$. Since $F\in S_{a\tilde{\N}}$, $\bigcap_{n\in F}e(n)\tilde{Q_a}\bigcap_{n\in F}e(n)$. Fix $u=\bigcap_{n\in F}e(n)$. Then $u\in[w]_a$, so $w\tilde{R_a}u$. Since $u\tilde{Q_a}u$ and $\tilde{Q_a}$ completes $\tilde{R_a}$, we have that $w\tilde{Q_a}u$. So, by our supposition, $\tilde{\N},u\models\varphi$. Since $u=\bigcap_{n\in F}e(n)$, $f(u)=F$, and so by our assumption, $\M_{\tilde{\N}},F\models\varphi$. The desired conclusion follows.
			
			Suppose $\M_{\tilde{\N}},f(w)\models B_a\varphi$. Fix $u$ such that $w\tilde{Q_a}u$. By the combination of seriality and the Euclidean property, $u\tilde{Q_a}u$. So, since $\bigcap_{n\in f(u)}e(n)=u$, $\bigcap_{n\in f(u)}e(n)\tilde{Q_a}\bigcap_{n\in f(u)}e(n)$. Moreover, because $\tilde{Q_a}$ is a subrelation of $\tilde{R_a}$, $w\tilde{R_a}u$, and $[w]_a=[u]_a$. It follows that $\pi_a(f(u))=\pi_a(f(w))$. So, by our supposition, $\M_{\tilde{\N}},f(u)\models\varphi$. By our assumption, $\tilde{\N},u\models\varphi$. The desired conclusion follows.
		\end{proof}
	}
	
	
	We can now
	establish
	completeness. Suppose for simplicial belief models we have $\Gamma\models\varphi$, and suppose for contradiction that $\Gamma\nvdash_{\full}\varphi$. Let $w$ be
	a maximal $\full$-consistent
	set
	extending $\Gamma \cup \{\lnot \phi\}$ (obtained, as usual, via Lindenbaum's Lemma).
	Then $\N,w\models\Gamma\cup\{\neg\varphi\}$. It
	then follows from the above
	that (for any $u$), $\M_{\tilde{\N}},f((w,u))\models\Gamma\cup\{\neg\varphi\}$, a contradiction.
\end{proof}

\commentout{This, model, however, has no good Kripke translation. This is easily seen by noting that there is a red and a green arrow from $w_1$ t $w_2$, and a red and a blue arrow from $w_2$ to $w_1$, then as facets, these two would have to share their red, blue, and green perspective. Hence, they would have to be the same set. However, this cannot be the case, since $P$ is false at one and true at the other. 
	
	This is all the more surprising because the frame is proper. That is, there exists no pair of worlds $w$ and $u$ such that $wR_a u$ for all $a\in A$. Thus, the restriction from the previous literature, that we only consider proper frames for translation into simplicial complexes, is not sufficient in the case of belief. What we need to consider are what we call \defin{belief proper} frames. Let $R'_a$ be the minimal extension of $R_a$ which is an equivalence relation. We say that a collection of frames is \defin{belief proper} if and only if there exists no pair of worlds $w$ and $u$ such that $wR'_au$ for all $a\in A$. Note that, indeed, the above frame is not belief proper. In the next section, we will show a way to translate belief improper frames into theory preserving belief proper ones.}

\commentout{
	Finally, a quick remark about our semantics for atomic propositions. The previous literature does not directly define models on the basis of consistent unions of perspectives. \cite{KaSC}\cite{DoA}\cite{Death}\cite{SimpDEL}\cite{SimpDEL}\cite{HG}\cite{FA} The closest comparison is \cite{SimpDEL}, where unions of agent perspectives are assumed to be maximal consistent sets by fiat. That is, at each vertex, either $P$ or $\neg P$ is assigned. Here, instead, consistency is the process by which we build the simplicial complex. Other literature assigns propositions directly to the facets themselves, and achieves the same effect \cite{Death}\cite{FA}\cite{SimpSet}. Our semantics does not assume that the assignment of atoms to vertices necessarily fixes the truth of each literal explicitly on each face, unlike \cite{SimpDEL}. That is, we do not assume that for each literal $P$ and each face $X$, there is a vertex $x\in X$ containing either $P$ or $\neg P$. Instead, each vertex is given its own assignment of literals, determined by the desires of the modeler. Our definition of truth then determines that, on any face which does not have a vertex containing $P$ or $\neg P$, $P$ is assumed to be false. We call this ``truth minimality.'' \footnote{Truth minimality is such that our models do not make sound the ``Locality'' axiom present in previous literature \cite{KaSC}. In this literature, A partition is made on $\mathfrak{P}$ over $A$, so that each $a\in A$ corresponds with a distinct subset of $\mathfrak{P}$. The set of atoms assigned to a given agent are called the ``local variables'' for that agent.  \cite{KaSC} Given this, the following is sound when $P$ is local to $a$:
		
		$$\mathbf{LOC}: B_aP\vee B_a\neg P$$
		
		The previous literature would interpret the modality as knowledge, instead of belief, but the axiom itself is the same. Note that, in the setting where we have local variables and the axioms of \textbf{K}, \textbf{LOC} clearly implies \textbf{NU}, but the reverse does not hold. It's worth noting that \textbf{LOC} is undesirable for other reasons. Notably, one cannot model scenarios such as the classic muddy children problem. Using local variables, it would be intuitive to assign the proposition $M_i$, which is true when $i$ is muddy, to agent $i$, but this will not work as agent $i$ neither believes $M_i$ nor $\neg M_i$. Moreover, it's entirely unobvious which agent then to assign $M_i$ to. Our models avoid such strange conundrums, and indeed, it is quite natural to model muddy children in our semantics.}
	
	As a result, our system validates following for any $P\in\mathfrak{P}$:
	
	$$\mathbf{NU}: P\rightarrow\bigvee_{a\in A}B_aP$$
	
	We call this axiom \textbf{NU} for ``No Uncertainties,'' and it says that if an atom is true, then there is some agent who believes it. We verify its soundness below.
	
	\begin{proposition}{Validity of \textbf{NU} for arbitrary $\mathcal{M}$}\label{PropVNU}
		\begin{proof}
			Suppose that $\mathcal{M},X\models P$. Then there is some agent $a\in A$ such that $P\in L(\pi_a(X))$. Then for all $Y$ such that $\pi_a(X)=\pi_a(Y)$, $P\in L(\pi_a(Y))$. The desired result follows.
		\end{proof}
	\end{proposition}
	
	\textbf{NU} should be interpreted as saying that worlds (facets) are purely constituted by the perspectives of which they consist. That is, there are no atomic facts that are true except by way of being introduced by some agent's perspective. In general, this is an idealized assumption, like many in the previous literature, and we adopt it here for convenience. There are known ways of weakening this assumption, though we will leave that discussion for later.
}

\commentout{
	\section{Categorical Equivalence and Completeness Difficulty}\label{SecComp}
	
	It's easy to see that the models described in Section \ref{sec:sem} make sound the $\mathbf{K45}$ axioms for each belief modality, and we already have established that these models make sound the $\mathbf{NU}$ axiom. Furthermore, it is easy to see that our models, as intended, do not make sound the axiom $\mathbf{T}$ for any of our modalities. Consider the following example. Let $\mathfrak{P}=\{P\}$, \}$A=\{a,b\}$, $N=\{b_0,a,b_1\}$, $V(a)=a$, and $V(b_0,b_1)=b$. Furthermore, let $L(a)(P)=2$, and  $L(b_i)(P)=1-i$. This means that $\mathfrak{M}(N,V,L)=\{\{a,b_0\},\{a,b_1\}\}$. Finally, let $S_a=S_b=\{\{a,b_0\}\}$. We draw this as follows:
	
	\begin{center}
		\begin{tikzcd}
			\color{blue}b_0\color{black}(P) \arrow[red, r, no head, bend left] \arrow[blue, r, no head, bend right] & \color{red}a & \color{blue}b_1\color{black}(\neg P)
		\end{tikzcd}
	\end{center}
	
	Then at $X=\{a,b_1\}\in\mathcal{F}(\mathfrak{M}(N,V,L))$, we have that all three of $B_bP$, $B_aP$, and $\neg P$ are true, which falsifies the $T$ axiom for all belief modalities in the language. This leads us to the following conjecture:
	
	\begin{conjecture}\label{ConjSCB}
		The class of \textit{UCF} simplicial models is sound and complete with respect to propositional logic plus $\mathbf{K45+NU}$, modus ponens and necessitation for each belief modality.
	\end{conjecture}
	
	Typically, when proving completeness using simplicial models, one sets up a categorical equivalence between a category of your desired simplicial models and a category of Kripke models. \cite{KaSC} One can then transform the canonical Kripke model into a simplicial model using this functor, and if one shows that the functor is truth-preserving, in that it turns worlds into facets which satisfy the same formulas, then one has a canonical simplicial model as desired.
	
	
	In the existing literature, an important consideration is that the relevant class of Kripke models needs to be \defin{proper}. If our target class of simplicial models are \textit{UCF}, this is unavoidable if this strategy is taken. By \defin{proper}, we mean that, for any worlds $w$ and $u$, if $wR_au$, for all $a\in A$, then $w=u$. This is because we translate worlds into facets in with a bijection that the accessibility relation determines how they intersect. Let $w$ and $u$ be as before in a model $M$, and let $X_w$ and $X_u$ be the unique facets which result after translating $M$ into a \textit{UCF} simplicial complex. Since $wR_au$ for all $a\in A$, and this transformation should preserve the relation in the intersections, we need that $\pi_a(X_w)=\pi_a(X_v)$ for all $a\in A$. Of course, this implies that $X_w=X_u$ by \textit{UCF}, so if $u\neq w$, this functor cannot be a bijection, which is an issue for the translation. 
	
	However, in our setting, that the Kripke models be proper is not sufficient. Consider the following Kripke model which is transitive, Euclidean, serial, satisfies $\mathbf{NU}$ at each world, and indeed, proper:
	
	$$
	\begin{tikzcd}
		& w_1(P) \arrow[red, rd] \arrow[red, ld] \arrow[green, ld, bend right] \arrow[blue, rd, bend left] &                                                                                                                                                 \\
		w_2(P) \arrow[red, rr, no head] \arrow[red, loop, distance=2em, in=305, out=235] \arrow[green, loop, distance=2em, in=215, out=145] \arrow[blue, rr, bend left] &                                                                           & w_3(\neg P) \arrow[green, ll, bend left]  \arrow[red, loop, distance=2em, in=305, out=235] \arrow[blue, loop, distance=2em, in=35, out=325]
	\end{tikzcd}$$
	
	Let us assign a red, blue, and green vertex to each world $w_i$. Call them $\pi_{\color{red}a}(w_i)$, $\pi_{\color{blue}b}(w_i)$, and $\pi_{\color{green}c}(w_i)$ respectively. Because $w_1R_{\color{red}a}w_2$ and $w_1R_{\color{red}a}w_3$, $\pi_{\color{red}a}(w_1)=\pi_{\color{red}a}(w_2)=\pi_{\color{red}a}(w_3)$. Furthermore, since $w_1R_{\color{blue}}w_3$, then $\pi_{\color{blue}b}(w_1)=\pi_{\color{blue}b}(w_3)$, and since $w_2R_{\color{blue}}w_3$, then $\pi_{\color{blue}b}(w_2)=\pi_{\color{blue}b}(w_3)$. Similar reasoning will show that $\pi_{\color{green}c}(w_1)=\pi_{\color{green}c}(w_2)=\pi_{\color{green}c}(w_3)$. So, all three worlds have the same exact perspectives, and so would have to be the same facet. But this cannot be the case, as $w_1$ and $w_2$ make $P$ true, while $w_3$ makes it false. Furthermore, $w_2$ is accessible to $\color{red}a$, while $w_1$ is not.
	
	So, properness is not a sufficient condition for us to be able to translate belief frames into simplicial models. The issue in the above model gives a clue as to what the stronger condition should be. Extend each binary relation above to the minimal equivalence class containing that relation. Indeed, the model using those extensions is not proper. We call a Kripke model a \defin{proper retract} if the model whose frames are the minimal equivalence class containing the original relations is proper.
	
	\commentout{The two appropriate categories for our setting are the following.
		
		\begin{definition}[$\mathfrak{K}$]\label{DefCK} Let $K_1$ and $K_2$ be transitive, Euclidean, proper Kripke models such that for any world $w$ and atom $P$ true at that world, there exists an agent $a\in A$ such that if $wR_au$ and $w\in C(P)$, then $u\in C(P)$. It's easy to see that this makes \textbf{NU} sound. We call such Kripke models \textit{Ideal}. We say that $g:K_1\rightarrow K_2$ is a morphism iff, treating $g$ as a function on worlds, $wR_{a,1}u$ iff $g(w)R_{a,2}g(u)$, and $w\in C_1(P)$ iff $g(w)\in C_2(P)$. We say that the category whose objects are ideal Kripke models with these morphisms is $\mathfrak{K}$.
		\end{definition}
		
		\begin{definition}[$\mathfrak{S'}$]\label{DefCS'} Let $S_1$ and $S_2$ be \textit{UCF} simplicial models. We say that $f:S_1\rightarrow S_2$ is a morphism iff, treating $f$ as a function on facets, we have that for any facets $X$ and $Y$ in some $S_{1,b}$, $Y\in S_{1,a}$ and $\pi_a(X)=\pi_a(Y)$ iff $f(Y)\in S_{2,a}$ and $\pi_a(f(X))=\pi_a(f(Y))$, and $\bigcup_{a\in A}L(\pi_a(X))=\bigcup_{a\in A}L(\pi_a(f(X)))$. We say that the category whose objects are \textit{UCF} simplicial models with these morphisms is $\mathfrak{S}'$.
		\end{definition}
		
		And indeed, we get the appropriate categorical equivalence.
		
		\begin{restatable}[Categorical Equivalence of $\mathfrak{S}'$ and $\mathfrak{K}$]{thm}{BCE}
			\label{ThmES'K} 
			$\mathfrak{S}'$ and $\mathfrak{K}$ are equivalent categories.
		\end{restatable}
		\begin{proof}
			Section \ref{ProofES'K}
	\end{proof}}
	
	In the previous literature, this restriction that we can only consider \textit{proper} Kripke models is normally not an issue. Indeed, consider the canonical $\mathbf{S5+NU}$ Kripke model. That is, consider the Kripke model whose worlds are $\mathbf{S5+NU}$ maximal consistent sets, and we say that $wRu$ if and only if for all formulas $\varphi$, if $B\varphi\in w$, then $\varphi\in u$. We will call this the \textit{unboxing} frame. The multi agent case is much the same. If each modality in the language is an $\mathbf{S5+NU}$ modality, we construct the canonical model simply by building each $R_a$ for $a\in A$ using the unboxing frame for each modality $B_a$. The following is true:
	
	\begin{proposition}\label{PropCan}
		The canonical multi modal $\mathbf{S5}$ Kripke model is proper.
		\begin{proof}
			Let $w$ and $u$ be $\mathbf{S5}$ maximal consistent sets, and suppose $wR_au$ for each $a\in A$ under the unboxing frames. Then, for all formulas $\varphi$, if $B_a\varphi\in w$, then $\varphi\in u$. We will show by structural induction that $w=u$.
			
			The only interesting cases are the atomic and modal cases. Suppose $P\in w$. By $\mathbf{NU}$, there is $a\in A$ such that $B_aP\in w$. So, by the unboxing frame, $P\in u$. Suppose $P\in u$. By $\mathbf{NU}$, there is $a\in A$ such that $B_aP\in u$. Suppose, for a contradiction, that $\neg B_aP\in w$. Then by $\mathbf{5}$, $B_a\neg B_aP\in w$, and by unboxing, $\neg B_aP\in u$, a contradiction. So, by maximality, $B_aP\in w$. By $\mathbf{T}$, $P\in w$.
			
			Assuming the obvious inductive hypothesis, the modal case is an easy consequence of $\mathbf{4}$ and $\mathbf{5}$.
		\end{proof}
	\end{proposition}
	
	The important thing to note about the above proof is that it depends on $\mathbf{T}$. And, of course, conjecture \ref{ConjSCB} does not include $\mathbf{T}$ as an axiom, and we know this axiom is not sound by Proposition \ref{PropCan}. Thus, we cannot prove the canonical $\mathbf{K45+NU}$ Kripke model is proper, let alone a proper retract. Indeed, Let $A=\{a,b\}$, where $a$ is red and $b$ is blue, and let $M$ be the following Kripke model:
	
	\begin{center}
		\begin{tikzcd}
			u(P) \arrow[red, loop, distance=2em, in=215, out=145] \arrow[blue, loop, distance=2em, in=35, out=325] \arrow[blue, d] \\
			w(\neg P) \arrow[blue, u, bend right] \arrow[red, u, bend left] \arrow[blue, loop, distance=2em, in=35, out=325]      
		\end{tikzcd}
	\end{center}
	
	One can easy check that at both $w$ and $u$, $\mathbf{K45+NU}$ is sound for each modality. However, this pair of frames is not proper, as each of $wR_au$ and $wR_bu$ holds. And, of course, $w\neq u$, because $M,w\models\neg P$ and $M,u\models P$. In summary, there is no way to show that the pattern above does not hold in the canonical $\mathbf{K45+NU}$ Kripke model, and so possibly there is no way to translate this canonical Kripke model into a canonical simplicial model.
}


\commentout{\section{Three Solutions}\label{SecSol}
	
	\commentout{
		
		\subsection{Adding in Knowledge}\label{SecSolK}
		
		One of the neatest solutions to this problem is to add in a knowledge modality to our language. There is no reasonable sense in which a belief modality can be factive, that is, satisfy $\mathbf{T}$. However, there is no such conceptual restriction on knowledge. Thus, expanding our language to include a modality $K_a$ for each $a\in A$, which satisfies $K_a\varphi\rightarrow\varphi$ for all $\varphi$ could solve the issue. But before we do this, we should ensure that this modality is conceptually motivated.
		
		In the model of belief, each $S_a$ was a subcomplex of $\mathfrak{M}(N,V,L)$. The idea was that each $S_a$ was a selection of which worlds/facets each agent considered possible. Generally, belief is taken to be weaker than knowledge, in the sense that $K_a\varphi\rightarrow B_a\varphi$. So, the set of worlds determining knowledge should be wider than the set determining belief. So, we will add to our models a \textit{UCF} complex $S$, which is a subcomplex of $\mathfrak{M}(N,V,L)$, and we will say that each \textit{UCF} complex $S_a$ is a subcomplex of $S$. Finally, we will use the usual modal condition for knowledge from the previous literature, and assume $X\in\mathcal{F}(S)$ rather than $X\in\mathfrak{M}(N,V,L)$:
		
		\begin{align*}
			\\ \mathcal{M},X&\models K_a\varphi\text{ iff }\forall Y\in\mathcal{F}(S)\text{ if }\pi_a(X)=\pi_a(Y)\text{ then }\mathcal{M},Y\models\varphi
		\end{align*}
		
		Let's motivate each of these features in turn. Firstly, how should we interpret $S$? In general, we should think of $S$ as a specification of which facets of $\mathfrak{M}(N,V,L)$ actually count as worlds. For various reasons, the modeler may not want to think of certain collections of perspectives as actually giving a relevant possible world - $S$ simply selects which of these collections actually do count. So if $S$ is our set of worlds, this explains why each $S_a$ is a subcomplex of $S$, as each $S_a$ is the set of worlds which each agent merely believes are possible. Finally, the reason we use the usual modal condition for knowledge is that it will make valid both $K_a\varphi\rightarrow\varphi$ ad $K_a\varphi\rightarrow B_a\varphi$ for all formulas $\varphi$. We call the latter of these ``Knowledge Implies Belief'' or $\mathbf{KIB}$. This does restrict our interpretation what $K_a$ means, somewhat. If each $S_a$ is ``the set of worlds such that for each world $w$ in the set, there is a distinct world $u$ at which the agent considers $w$ possible'', then $S$ is the set of worlds as determined by the modeler, and then the formula $K_a\varphi$ being true at a facet $X$ means ``$\varphi$ is true in all worlds $w$ which the modeler considers that contain $\pi_a(X)$, $a$'s perspective at $X$, as opposed to all worlds that the agent believes possible at $X$.''
		
		Because this is the usual definition from the literature, we know then that this knowledge operator satisfies the $\mathbf{S5}$ axioms. Another interesting formula which is valid is the following:
		
		\begin{proposition}{Validity of $B_i\varphi\rightarrow K_iB_i\varphi$ for arbitrary $\mathcal{M}$ with knowledge}\label{PropPBI}
			\begin{proof}
				Suppose that $\mathcal{M},X\models B_i\varphi$. Then for all $Y\in\mathcal{F}(S_a)$ such that $\pi_a(X)=\pi_a(Y)$, we have that $\mathcal{M},Y\models\varphi$. Let $Z$ be an arbitrary face in $\mathfrak{M}(N,V,L)$ such that $\pi_a(X)=\pi_a(Z)$. Then, since $S_a$ is a subcomplex of $\mathfrak{M}(N,V,L)$, $\mathcal{M},Z\models B_a\varphi$. The desired result follows.
			\end{proof}
		\end{proposition}
		
		We also have a negative correlate:
		
		\begin{proposition}{Validity of $\neg B_i\varphi\rightarrow K_i\neg B_i\varphi$ for arbitrary $\mathcal{M}$ with knowledge}\label{PropNBI}
			\begin{proof}
				Much the same as before.
			\end{proof}
		\end{proposition}
		
		We will call these axioms ``Positive Belief Introspection'', or $\mathbf{PBI}$, and ``Negative Belief Introspection'', or $\mathbf{NBI}$, respectively. So, altogether, we know the $S5$ axioms for each knowledge modality are valid, the $K45$ axioms for each belief modality, $NU$ for each modality, propositional logic, $\mathbf{KIB}$, $\mathbf{PBI}$, and $\mathbf{NBI}$ for each agent, are all valid. Moreover, modus ponens, and necessitation for each modality are also valid. Call this collection $\mathbf{FULL}$. The following is true:
		
		\begin{restatable}[Completeness for Models with Knowledge]{thm}{SCK}
			\label{ThmSCK} 
			The class of models with knowledge is sound and complete with respect to the $\mathbf{FULL}$ axioms.
		\end{restatable}
		\begin{proof}
			Section \ref{ProofSCK}
	\end{proof}}
	
	\commentout{\subsection{Environment Nodes}\label{SecSolE}
		
		The second solution to the problem of completeness is to add in environment nodes. To do this, let $e$ be an agent $e\notin A$, and let $A^+=A\cup\{e\}$. Then let $V:N\rightarrow A^+$ and $L:N\rightarrow 3^\mathfrak{P}$. We construct $\mathfrak{M}(N,V,L)$ as before, and assume each $S_a$ is a nonempty $UCF$ subcomplex whose facets have dimension $|A|+1$ instead of dimension $|A|$. So, in addition to a unique node for each agent, each facet has a unique environmental node. That is, a unique node $n$ such that $V(n)=e$. However, there is no $S_e$. We call such models \textit{models with environment nodes}.
		
		The truth conditions are exactly as before, except for the atomic case, which becomes
		
		$$\mathcal{M},X\models P\text{ iff }\exists a\in A^+(L(\pi_a(X))(P)=1)$$
		
		The first interesting fact here is that each model as above is equivalent to one where, for all $n\in N$ such that $V(n)=e$, $L(n)(P)\in 2$ for all $P\in\mathfrak{P}$. More specifically, we can in a truth preserving way assign a maximal consistent set to every environment node.
		
		\begin{proposition}\label{PropEEN}
			For every model with environment nodes $\mathcal{M}$, there is a bijection on facets to a model $\mathcal{M}'$ which is bisimilar, and where every environment node is maximal consistent.
			\begin{proof}
				Let $\mathcal{M}=\langle N,V,L,\{S_a\}_{a\in A}\rangle$ be a model with environment nodes. Let $N^-:=\{n\in N|V(n)\neq e\}$. Furthermore, for each $X\in\mathcal{F}(S_a)$, let $e_X$ be a unique object corresponding to it and $E_a=\{e_X|X\in\mathcal{F}(S_a)\}$, and $E=\bigcup_{a\in A}E_a$\footnote{Note that, if $X\in \mathcal{F}(S_a)\cap\mathcal{F}(S_b)$, then $E_a\cap E_b\neq\emptyset$, as $e_X$ is an element of each.}. Then $N':=N^-\cup E$. Note that, if $n\in E$, there exists unique $X\in\mathcal{F}(S_a)$ such that $n=e_X$. Call this $X_n$. We define $V':N'\rightarrow A\cup\{e\}$ and $L':N'\rightarrow 3^\mathfrak{P}$ as follows:
				
				$$
				V'(n):=\begin{cases}
					V(n)	& \text{ if } n\in N^- \\
					e	& \text{ if } n\in E 
				\end{cases}$$ 
				
				$$L'(n)(P):=\begin{cases}
					L(n)(P)	& \text{ if } n\in N^- \\
					1	& \text{ if } \exists a\in A\cup\{e\}(L(\pi_a(X_n))=1)\\
					0	& \text{ otherwise }
				\end{cases} $$
				
				Note that, as desired, if $V'(n)=e$, then $L'(n)(P)\in 2$ for all $P\in\mathfrak{P}$. Last but not least, we say that $S'_a$ is the simplicial complex uniquely determined by
				
				$$\mathcal{F}(S'_a):=\{X\subseteq N'|\exists Y\in\mathcal{F}(S_a)(\forall a\in A(\pi_a(X)=\pi_a(Y))\wedge(\pi_e(X)=e_Y))\}$$
				
				We first show that $\mathcal{F}(S'_a)$ and $\mathcal{F}(S_a)$ are in bijection. In particular $f:\mathcal{F}(S'_a)\rightarrow \mathcal{F}(S_a)$ given by $f(X)=Y$ such that $\pi_e(X)=e_Y$ is a bijection. If $f(X)=f(Z)$, then $\pi_e(X)=e_Y=\pi_e(Z)$. By the definition of $S'_a$, $X=Z$, so $f$ is injective. Fix $Y\in\mathcal{F}(S_a)$. Then the set $X:=\{\bigcup_{a\in A}\pi_a(Y)\}\cup e_Y\subseteq N'$, and $f(X)=Y$, so $f$ is surjective. So, let $\mathcal{M}':=\langle N',V',L',\{S'_a\}_{a\in A}\rangle$.
				
				Lastly, we need to show that $\mathcal{M},f(X)$ and $\mathcal{M}',X$ satisfy the same formulas. The only interesting case is the atomic case\footnote{For the modal case, simply observe that $\pi_a(X)=\pi_a(f(X))$ for all $a\in A$.}. Suppose $\mathcal{M},f(X)\models P$. Then $\exists a\in A\cup\{e\}$ such that $L(\pi_a(f(X)))(P)=1$. The only interesting case is when $a=e$. If $L(\pi_e(f(X)))(P)=1$, then $L'(\pi_e(X))(P)=1$, as $\pi_e(f(X))=\pi_e(Y)$, where $\pi_e(X)=e_Y$. This suffices for the forward direction. Suppose $\mathcal{M}',X\models P$. Then $\exists a\in A\cup\{e\}$ such that $L'(\pi_a(X))(P)=1$. The only interesting case is when $a=e$. Then by construction, $\exists a\in A\cup\{e\}(L(\pi_a(X_n))=1)$, and we are done.
			\end{proof}
		\end{proposition}
		
		Models with environment nodes where the environment nodes are maximal consistent will be extremely useful when proving completeness, though we will hold off until the next section how this system resolves our issue with the completeness proof. Moreover, when environment nodes are maximal consistent, we can rewrite the truth condition for atoms as follows.
		
		$$\mathcal{M},X\models P\text{ iff }L(\pi_e(X))(P)=1$$
		
		It is easy to see that these conditions are equivalent.
		
		\begin{proposition}\label{PropMEN}
			$\exists a\in A^+(L(\pi_a(X))(P)=1$ if and only if $L(\pi_e(X))(P)=1$
			\begin{proof}
				The backwards direction is trivial. For for the forward direction, fix $a\in A^+$ such that $L(\pi_a(X))(P)=1$. Since $L(\pi_e(X))(P)\in 2$, by the consistency of faces, this is true if and only if $L(\pi_e(X))(P)=1$.
			\end{proof}
		\end{proposition}
		
		A reasonable concern arises based on this new definition; At no point in our truth conditions is what is true at each individual node appealed to, except to construct $\mathfrak{M}(N,V,L)$. Thus, given any model, it is identical to one with the same simplicial structure but where $L(n)(P)=2$ for all $n$ and $P$. In this way, the environment nodes have ``taken over'' the model in many respects. We will leave this discussion for now and come back to it in the next section.
		
		There is an additional benefit to models with environment nodes. These models do not validate $\mathbf{NU}$. Consider the following example. Let $\mathfrak{P}=\{P\}$, $A=\{a\}$, $N=\{e_0,e_1,a\}$, $V(e_i)=e$ for all $i\in 2$, $V(a)=a$, $L(a)(P)=2$ and $L(e_i)(P)=i$. This means that $\mathfrak{M}(N,V,L)=\{\{a,e_0\},\{a,e_1\}\}$. So, let $S_a=\{\{a,e_0\},\{a,e_1\}\}$, and let $\mathcal{M}$ be the resulting model. We draw this as follows:
		
		$$
		\begin{tikzcd}
			e_0(\neg P) \arrow[red, r, no head] & \color{red}a \arrow[red, r, no head] & e_1(P)
		\end{tikzcd}$$
		
		Note that $\mathcal{M},\{\color{red}a\color{black},e_1\}\models P$, but $\mathcal{M},\{\color{red}a\color{black},e_1\}\models\neg B_aP$, falsifying $\mathbf{NU}$.}
	
	
	\subsection{Translation from Improper to Proper}
	
	
	
	
	The goal of this section is to describe a procedure which takes a (transitive, Euclidean, symmetric, and/or reflexive) relational model, and spits out a \textit{proper} (transitive, Euclidean, symmetric, and/or reflexive) relational model which is bisimilar. The ultimate use of this procedure, following the existing literature, will be to achieve a proof of completeness. \cite{KaSC}\cite{DoA}\cite{Death}\cite{SimpDEL}\cite{FA} However, the procedure we describe differs from those in the existing literature. In \cite{Death}, the classic ``unwinding'' algorithm is suggested in their proof sketch of completeness. However, an ``unwinded'' (transitive, Euclidean, symmetric, and/or reflexive) relational model may not itself be (transitive, Euclidean, symmetric, and/or reflexive), even if it is proper. That the unwinded model be transitive and symmetric is necessary for the application of their proposition 32, which is their translation from relational to simplicial models. The other relevant point of comparison is the ``unraveling'' procedure described in \cite{FA}. However, this procedure cannot help us here for a few reasons. The biggest is that it solves a different problem. ``Unraveling'' transforms a ``pseudo model'' (a relational model with a binary relation for each \textit{set} of agents, rather than each agent) and transforms it into a relational model in our sense. Additionally, the canonical model they define actually is proper in the first place (and hence unraveling does not solve this particular issue). In this paper, they have an axiom, \textbf{P}, which ensures the canonical relational model in their setting is proper. This is similar to the context where formulas are assigned to the vertices of the simplicial complex, rather than faces. In those contexts, there are additional axioms, which, in conjunction with factivity, ensure the relevant canonical relational model is proper. \cite{DoA}\cite{KaSC}\cite{SimpDEL}\cite{SimpDEL} Since we have neither factivity, nor an axiom like \textbf{P} at our disposal, we will have to appeal to a transformation.
	
	Consider the following model again:
	
	$$
	\begin{tikzcd}
		& w_1(P) \arrow[red, rd] \arrow[red, ld] \arrow[green, ld, bend right] \arrow[blue, rd, bend left] &                                                                                                                                                 \\
		w_2(P) \arrow[red, rr, no head] \arrow[red, loop, distance=2em, in=305, out=235] \arrow[green, loop, distance=2em, in=215, out=145] \arrow[blue, rr, bend left] &                                                                           & w_3(\neg P) \arrow[green, ll, bend left]  \arrow[red, loop, distance=2em, in=305, out=235] \arrow[blue, loop, distance=2em, in=35, out=325]
	\end{tikzcd}$$
	
	Any extension of the above frames to equivalence relations leads to an improper model. Hence, any underlying knowledge relation for the above model must be improper, and therefore the above model cannot be naturally translated into a simplicial model. However, as a model of belief in the Kripke setting, there is nothing unusual or redundant about the picture here, as there arguably is when considering an improper model in the knowledge picture. The story told above is that there are three worlds. Red considers worlds 2 and 3 possible at every world and cannot tell them apart. At every world, green considers only world 2 possible, and blue only world 3. Nothing about this story is obviously unnatural, and yet we cannot create a simplicial model of it in an obvious way. This seems to be a deficiency with simplicial models of belief, and a deficiency with ruling out improper Kripke models.
	
	A solution to this would be to transform any improper Kripke models into a proper Kripke model which is bisimilar. Indeed, such a translation is possible. Take a model $\mathcal{M}=\langle W,\{R_a\}_{a\in Ag},\{Q_a\}_{a\in Ag},V\rangle$ and define $\mathcal{M}'=\langle W',\{R'_a\}_{a\in Ag},\{Q'_a\}_{a\in Ag},V'\rangle$ such that $W'=\{w_{a,i}|w\in W,a\in Ag,i\in\omega\}$, $V'(P)=\{w_{a,i}|w\in V(P)\}$, and $w_{a,i}R'_bu_{c,j}$ if and only if $wR_bu$, $c=b$, and if $a=b$, then $i=j$, and if $a\neq b$, then $j>i$ is such that 
	
	\begin{enumerate}
		\item there does not exist distinct $v_{d,k}$ with $d\neq b$ and possibly distinct $x_{b,j}$ such that $v_{d,k}R'_bx_{b,j}$
		\item for each $w_{a,i}R'_bu_{b,j}$ and $w_{a,i}R'_bv_{b,k}$, $j=k$.\footnote{These properties do not uniquely determine a model. They sufficiently determine one.}
	\end{enumerate}
	
	$Q'_a$ is defined identically but for relation $Q_a$, so $Q'_a\subseteq R'_a$ for each $a\in Ag$. Informally, property 1 (as we shall refer to it) says ``To the cluster ${w_{b,j}|w\in W}$, or the cluster of $b$ worlds of depth $j$, there is no incoming $b$ edge from a non $b$ world.'' and property 2 says ``two outgoing $b$ edges from the same $a$ world go to the same depth.'' The result of these two properties together is very similar to the ``unwinding'' models existing in the literature (source). First, one imagines generating a cluster of worlds for each natural number and agent pair. Then on this cluster one replicates that particular agent's frame. Then, for all other outgoing edges, one points to a separate cluster with no incoming edges already present. One such translation of the above model would look like the following:
	
	(picture)
	
	In general, though, we need to know that given a Kripke model $\mathcal{M}$, $\mathcal{M}'$ satisfying properties 1 and 2 exists. This can be done using the uniqueness of prime factorization. Assume $\mathcal{M}$ has countably many worlds (we will return to this assumption later). Let $I$ be the set of primes, and $I_1$ and $I_2$ countably infinite sets of primes such that $I=I_1\sqcup I_2$. Then, let $f:W\times A_g\rightarrow I_1$ and $g:\mathbb{N}\rightarrow I_2$ be bijections. We say that $w_{a,i}R'_bu_{c,j}$ if and only if $wR_bu$, $c=b$, and if $a=b$ then $i=j$, and if $a\neq b$, $j=f(a,w)g(i)$. Property 1 follows from uniqueness of prime factorization, while property 2 follows from definition of $j$.
	
	Now that we know $\mathcal{M}'$ can always exist, we need to show that any $\mathcal{M}'$ satisfyig properties 1 and 2 are as desired.
	
	\begin{theorem}\label{ThmBIBP}
		Let $\mathcal{M}$ be a (transitive), (Euclidean), (serial) (reflexive) (symmetric) Kripke model with countably many worlds. Then $\mathcal{M}'$ satisfying properties 1 and 2 is a bisimilar model which is (transitive), (Euclidean), (serial) (reflexive) (symmetric) and also proper.
		\begin{proof}
			
			Suppose $\mathcal{M}$ os transitive, and suppose $w_{a,i}R'_bu_{b,j}$ and $u_{b,j}R'_bv_{b,k}$. Then by the transitivity of $\mathcal{M}$, $wR_bv$. Furthermore, because the arrow from $u_{b,j}$ to $v_{b,k}$ is from a $b$-world to a $b$-world, $k=j$. Therefore, by property 2, $w_{a,i}R'_bv_{b,k}$. 
			
			Suppose $\mathcal{M}$ is Euclidean. and suppose $w_{a,i}R'_bu_{b,j}$ and $w_{a,i}R'_bv_{b,k}$. By property 2, $k=j$, and so by property 1 and the fact that $\mathcal{M}$ is Euclidean, $u_{b,j}R'_bv_{b,k}$.
			
			It is easy to see that serial $\mathcal{M}$ returns serial $\mathcal{M}'$. Furthermore, if $\mathcal{M}$ is reflexive, then $\mathcal{M}'$ is reflexive, by the fact that if $w_{a,i}R'_bu_{b,j}$ and $a=b$, then $i=j$. For similar reasons, if $\mathcal{M}$ is symmetric, then so is $\mathcal{M}'$.
			
			The above proofs apply to each $Q_a$ equally.
			
			We now show that $\mathcal{M}'$ is proper. We first show the case with two agents. Consider the arbitrary world $w_{a,i}$. The only possible worlds you can access via $\hat{R'}_a$ are other $a$ worlds in layer $i$. This is because the only worlds Moreover, there are two kinds of distinct possible worlds you can access via $\hat{R'}_b$; $w_{a,i}$ itself, or $b$-worlds in a layer $j>i$. This last follows from property 1 which ensures the fact that there can be no other kind of world except $a$ worlds of depth $i$ which have $b$ edges in $R'$ to $b$ worlds of depth $j$. The important point is that the only world which is both $\hat{R'}_a$ accessible from $w_{a,i}$ and $\hat{R'}_b$ accessible from $w_{a,i}$ is $w_{a,i}$ itself, as desired.
			
			Now suppose there are $n$ many agents, and suppose for a contradiction $\mathcal{M}'$ is Belief improper. Let $w_{a,i}$ and $u_{b,j}$ bear witness to this. That is, for all $\alpha\in A$, $w_{a,i}\hat{R'}_\alpha u_{b,j}$. Restrict $\mathcal{M}'$ to only $a$ and $b$ worlds, and the $a$ and $b$ relations. Call this model $\mathcal{M'}^-$. Then $\mathcal{M}'^-$ is belief improper, as there are only two agents $a$ and $b$, and $w_{a,i}\hat{R'}_a u_{b,j}$ and $w_{a,i}\hat{R'}_b u_{b,j}$. Moreover, $\mathcal{M}'^-$ satisfies both properties 1 and 2. But this is a contradiction of our case of two agents.
			
			Now we show that $\mathcal{M}$ and $\mathcal{M}'$ are bisimilar. The function from the latter to the former giving the bismulation is $f(w_{a,i})=w$. To check, if $w_{a,i}R'_bu{b,j}$, then $wR_bu$, and if $wR_bu$, then there is $j>i$ such that $w_{a,i}R'_bu_{b,j}$. 
			
		\end{proof}
		
	\end{theorem}
	
	This result lets us translate any ideal transitive, Euclidean Kripke frame with countably many worlds into a simplicial complex which is bisimilar. That, however, is still not quite enough for completeness. We cannot find a model bisimilar to the canonical model which is proper, because the canonical model has uncountably many worlds. However, this is merely a technical limitation. All of the above proofs apply except the proof that $\mathcal{M}'$ can always exist, and this only failed because there are merely countably many primes and countably many natural numbers. So, all is needed is a larger unique factorization domain. For our purposes, the ring of polynomials in a single variable with real-number coefficients, $\mathbb{R}[X]$, will suffice.
	
	$\mathcal{M}=\langle W,\{R_a\}_{a\in A},V\rangle$ and define $\mathcal{M}'=\langle W',\{R'_a\}_{a\in A},V'\rangle$ such that $W'=\{w_{a,r}|w\in W,a\in A,r\in\mathbb{R}[X]\}$, $V'(P)=\{w_{a,r}|w\in V(P)\}$, and $w_{a,r}R'_bu_{c,s}$ if and only if $wR_bu$, $c=b$, and if $a=b$, then $s=r$, and if $a\neq b$, then $s\neq r$ is such that 
	
	\begin{enumerate}
		\item there does not exist distinct $v_{d,t}$ with $d\neq b$ and possibly distinct $x_{b,s}$ such that $v_{d,t}R'_bx_{b,s}$
		\item for each $w_{a,r}R'_bu_{b,s}$ and $w_{a,r}R'_bv_{b,t}$, $s=t$.\footnote{These properties do not uniquely determine a model. They sufficiently determine one.}
	\end{enumerate}
	
	We now show existence. Let $I=\{x+r|r\in\mathbb{R}\}$, which are notably non-associated irreducibles, and fix $I_1$ and $I_2$ such that $I=I_1\sqcup I_2$, and $I_1$ and $I_2$ are both uncountable. Let $f:W\times A_g\rightarrow I_1$ be a bijection, and $g:R[X]\rightarrow I_2$ be a bijection.
	
	We say that $w_{a,r}R'_bu_{c,s}$ if and only if $wR_bu$, $c=b$, and if $a=b$ then $s=r$, and if $a\neq b$, then $s=f(a,w)g(r)$.
	
	We can now show that properties 1 and 2 are satisfied. Suppose $f(a,w)g(r)=f(d,v)g(k)$. By the fact that $\mathbb{R}[X]$ is a UFD, and the fact that $\mathbb{R}$ is a domain, and thus the units of $\mathbb{R}[X]$ are $\mathbb{R}$, either $f(a,w)=r'f(d,v)$, or $f(a,w)=r'g(k)$ for some $r'\in \mathbb{R}$. Note that in either case it must be that $r'=1$. Thus, either case leads to a contradiction. 
	
	We can now prove completeness. Construct the canonical model $\mathcal{M}$ using the \textit{unboxing} frame on \textbf{FULL} maximal consistent sets. For two maximal consistent sets $w_1$ and $w_2$, $w_1 R_a w_2$ if and only if for all formulas $\varphi$, if $K_a\varphi\in w_1$, then $\varphi\in w_2$. Similarly, $w_1Q_a w_2$ if and only if for all formulas $\varphi$, if $B_a\varphi\in w_1$, then $\varphi\in w_2$. From this, and the fact that $K_a\varphi\rightarrow B_1\varphi$ is an axiom for all $\varphi$, $Q_a\subseteq R_a$ for all $a\in Ag$. Suppose $\Gamma\nvdash_{\textbf{FULL}}\varphi$, and fix a maximal consistent extension $w$ of $\Gamma\cup\{\neg\varphi\}$. Then by first applying the above algorithm to create proper $\mathcal{M}'$, and then applying Theorem 3.1 to create $\mathcal{M}'_\N$, we have by Theorem 3.2 $\mathcal{M}'_\N,f(w)\models\Gamma\cup\{\neg\varphi\}$, as desired.
	
	\draft{Philip6: A concern I have with the proof of completeness is that I can't figure out the role that strong positive introspection and strong negative introspection play. If we get rid of 4 and 5 for belief, I see that we can use them plus the fact that knowledge implies beleif to recover 4 and 5, so maybe they're just redundant. But I do not see how to prove them syntactically in their absence.}
	
	A final note is that this result extends those already present in the literature.

	
	\commentout{Let $\mathcal{M}=(W,(R_a)_{a\in Ag},(Q_a)_{a\in Ag},v)$ be a relational model for introspective knowledge and belief, and $\Sigma$ a set of formulas closed under subsets. Moreover, for each $\varphi\in\Sigma$ and $a\in Ag$, $B_a\varphi\in\Sigma$. We define the filtration of $\mathcal{M}$ by $\Sigma$, or $\mathcal{M}_\Sigma:=(W_\Sigma,(R_{a,\Sigma})_{a\in Ag},v_\Sigma)$ as follows:
		
		\begin{enumerate}
			\item $W_\Sigma:=\{[w]|w\in W\}$ where $[w]:=\{u\in w|\varphi\in\Sigma\Rightarrow(\mathcal{M},w\models\varphi\Leftrightarrow\mathcal{M},u\models\varphi)\}$
		\end{enumerate}

		First, define new filtrations.
		
		Assume $V^*(P)=\{[w]|P\in\Gamma\wedge w\in V(P)\}$
		
		For every atomic $P\in\Gamma$, $B_aP\in\Gamma$ for every $a\in A$. $\Gamma$ is still closed under subsets.
		
		Still generates a filtration, and $[w]$ and $w$ agree on $\Gamma$ formulas, because both the new and old $V^*$ agree on $\Gamma$ formulas.
		
		Start with an ideal frame $\mathcal{M}$. Show its filtration $\mathcal{M}^*$ is ideal.
		
		Suppose $\mathcal{M}^*,[w]\models P$. We need to find $a\in A$ such that for all $[u]$ such that $[w]R*_a[u]$, $\mathcal{M}^*,[u]\models P$. So, $[w]\in V^*(P)$, and therefore $P\in\Gamma$ and $w\in V(P)$. By the latter, $\mathcal{M},w\models P$. Since $\mathcal{M}$ is ideal, fix $a\in A$ such that $\mathcal{M},w\models B_a P$. Since $B_aP\in\Gamma$, by the fact that $\mathcal{M}^*$ is a filtration, $\mathcal{M}^*,[w]\models B_aP$. So, by the definition of a filtration, if $[w]R*_a[u]$, $\mathcal{M}^*,[u]\models P$.}
	
	\subsection{Factive/non-Factive Perspectives}
	
	\subsection{Simplicial Sets}\label{SecSolS}
	
	
	The final way we can solve our completeness issue is by generalizing the models to simplicial sets. The key difference between simplicial sets and simplicial complexes is that sets of nodes can be present with multiplicity. This is similar to the generalization of a graph to a multigraph, or a set to a multiset. Whereas in a simplicial set, the triangle $\{a,b,c\}$ is uniquely determined by its three composite nodes, in a simplicial set, there could be multiple triangles $\{a,b,c\}$. However, unlike \cite{SimpSet}, we are not interested in modeling any faulty agents or agent death. Thus, every world we consider will need to still have exactly one node for each agent, and so we are only interested in a particularly constricted set of simplicial set models. For our purposes, we define a simplicial set as a ``multiset of multisets closed under subsets.'' \textit{UCF} means the same as before, with the understanding that multiplicity of the same node also violates \textit{UCF}. For example, the edge $\{n,n\}$ is not allowed because it necessarily contains ``two'' nodes assigned to the same agent. We further restrict the simplicial set models by the restriction \textit{OFM}, for ``Only Facet Multiplicity''. This rule says that a simplicial set may only contain multiple instances of a face if that face is a facet. In conjunction with \textit{UCF}, \textit{OFM} ensures that the only faces which are possibly represented multiply are those with exactly one node for each agent. That is, a \textit{UCF} and \textit{OFM} simplicial set is a \textit{UCF} simplicial complex with possibly multiple copies of some or all of the facets. Also, we extend the domain of $\mathcal{F}$ so that when given a simplicial set $S$ it returns the multiset of facets of $S$. 
	
	Let $N$ $V$, and $L$ be exactly as before, But instead let $\mathfrak{M}(N,V,L)$ be as before, but instead contain countably many copies of every face. Then each $S_a$ is a \textit{UCF} and \textit{OFM} sub simplicial set of $\mathfrak{M}(N,V,L)$. The last piece we need to define is a set of functions $L^+:\mathcal{F}(\mathfrak{M}(N,V,L))\rightarrow 2^\mathfrak{P}$. Thus, $L^a$ assigns a maximal consistent set to each facet of the simplicial set $\mathfrak{M}(N,V,L)$. The only restriction is that if $L^+(X)(P)=i$, then there is no $a\in A$ such that $L(\pi_a(X))(P)=1-i$. This ensures that the maximal consistent set assigned to each facet is an extension of what is true at the nodes of the facet\footnote{This allows us to generalize the models in a natural way. Rather than $L$ only assigning literals to nodes, $L$ can assign any formula to a node. The maximal consistent model is still defined using consistent sets, and $L^+$ then takes each facet and extends the consistent sets to maximal consistent ones.}. 
	
	The truth conditions are the same as in the simplicial complex case, except for the atomic case. Let $\mathcal{M}=\langle N,V,L,\{S_a\}_{a\in A},L^+$, and $X\in\mathcal{F}(\mathfrak{M}(N,V,L))$:
	
	$$\mathcal{M},X\models P\text{ iff }L^+(X)(P)=1$$
	
	Note that with these truth conditions, at no point do we appeal to $L$, except in the construction of $\mathfrak{M}(N,V,L)$. Therefore, each simplicial set model is equivalent to one with the same simplicial structure but $L(n)(P)=2$ for all $n$ and $P$. Of course, this is the exact same restriction which is possible in models with environment nodes. Indeed the similarity runs further, as there is a truth preserving categorical equivalence between these two classes of models.
	
	\begin{theorem}
		There is a natural category of models with equivalence classes and a natural category of simplicial set models which are equivalent.
		\begin{proof}
			Section \ref{ProofSSEN}
		\end{proof}
	\end{theorem}
	
	We can leverage the above to give a completeness proof.}

\section{Related and Future Work} \label{sec:Conc}


The simplicial representation of belief we introduce in
this paper stands in contrast to
two existing approaches.
In ``Knowledge and Simplicial Complexes'' \cite{KaSC}, standard UCF simplicial models $(N,V,S,L)$ are augmented with idempotent functions $f_{a} \colon V^{-1}(a) \to V^{-1}(a)$, one for each agent $a \in Ag$. These functions are used to define belief via the following clause:
\begin{align*}
	\M,X & \models B_a \phi \text{ iff } \forall Y \in \F(S) \text{ if } \pi_a(Y) = f_{a}(\pi_a(X)) \text{ then } \M,Y \models \phi.
\end{align*}
Loosely speaking, whereas a node $n \in V^{-1}(a)$ represents a ``perspective'' of agent $a$ for the purposes of evaluating knowledge modalities at facets containing $n$, for the purposes of evaluating \textit{belief} modalities at facets containing $n$, this semantics instead uses the node $f_{a}(n)$. In other words, agent $a$'s beliefs at $X$ are given by her knowledge at $f_{a}(\pi_{a}(X))$.
Because each $f_a$ is idempotent, it is easy to check that the $B_a$ modalities satisfy the \textsf{KD45} axioms.

However, this notion of belief
has essentially no relationship to knowledge; in particular,  
$K_a\varphi\rightarrow B_a\varphi$ is not valid.
A
natural solution to this might be to strengthen the semantics
so as to validate this formula. Unfortunately, this results in triviality, for it requires the following to hold for all $X\in\mathcal{F}(S)$:

$$\{Y \in \mathcal{F}(S) \: : \: f_a(\pi_a(X)) = \pi_a(Y)\} \subseteq \{Y \in \mathcal{F}(S) \: : \:\pi_a(X) = \pi_a(Y)\};$$
under the UCF assumption, this
containment does not hold in general unless
$f_a$ is the identity function.
In this case, of course,
belief and knowledge coincide.
Thus, this model for belief cannot satisfy both
UCF and the validity of ``knowledge implies belief'' without trivializing belief.

We find a quite different approach to defining belief in simplicial complexes
in ``Simplicial Belief''. \cite{SimpBel} The idea here depends on breaking the UCF assumption
so that facets may contain one \textit{or more} $a$-colored nodes, and then using the number of nodes a given facet contains to determine whether or not it is ``accessible'' for the purposes of evaluating belief. Specifically,
given $X \in \mathcal{F}(S)$, let $m_a(X):=|\{n\in X \: : \: V(n)=a\}|$
count the
number
of $a$-colored nodes in $X$,
and for
any $X,Y\in\mathcal{F}(S)$,
define
$$X \leq_a Y \text{ iff } m_a(X) \leq m_a(Y).$$
Write
$X \sim_a Y$ if and only if $\exists n\in X\cap Y$ such that $V(n)=a$,
and define $\trianglelefteq_a \; = \; \leq_a \cap \sim_a$.
These relations are used to define belief via the following clause:
\begin{align*}
	\M,X \models B_a \phi \text{ iff } \forall Y \in \F(S) \text{ if } Y \trianglelefteq_{a} X \text{ then }\M,Y \models \phi.
\end{align*}

This says that
if
$\varphi$ is true at all facets
$Y$ whose intersection with $X$ contains an $a$-colored node, and such that the number of $a$-colored nodes in $Y$ is bounded above by the number in $X$,
then $B_a\varphi$ is true at $X$.\footnote{The paper also provides a stronger notion of belief, taking the belief relation to access worlds with a \textit{minimum} number of perspectives, rather than all those with fewer than the starting world. The same result we show for the given definition also applies to this one.}
\commentout{
	They also take a second, stronger notion of belief:
	
	\begin{align*}
		\mathcal{M},X\models B_a\varphi\text{ iff }\forall Y\in\mathcal{F}(S) & \text{ if }X\trianglerighteq_a Y\\
		& \text{ and }\forall Z\in\mathcal{F}(S)(\text{ if }X\trianglerighteq_a Z\text { then }m_a(Z)\geq m_a(Y))\\
		& \text{ then }\mathcal{M},Y\models\varphi
	\end{align*}
	
	This second definition says that if $\varphi$ is true at all facets sharing an $a$-intersection with $X$ with a minimal number of perspectives over this collection, then $B_a\varphi$ is true at $X$.
}
The rough intuition here, as articulated in \cite{SimpBel}, is that facets containing more $a$-perspectives are less plausible (to agent $a$) than those containing fewer, since $a$ is in some sense less certain of their own state of mind in the former.

It's easy to see that this definition validates
$K_{a} \phi \lthen B_{a}\phi$.
And it is certainly interesting from a formal perspective to relax the UCF assumption and leverage this extra freedom to define belief. However, the epistemic interpretation of facets with multiple $a$-perspectives remains somewhat murky.
Unfortunately, if we re-impose the UCF condition, this of course implies that for all $X \in \F(S)$, $m_a(X)=1$, making $\leq_a$ the total relation on $\mathcal{F}(S)$, hence $\trianglelefteq_a \; = \; \sim_a$.
In this case again we see that belief and knowledge coincide.

In contrast to these approaches, our framework provides a definition of belief in simplicial complexes satisfying \textit{all} three of: (1) UCF, (2) validity of $K_{a} \phi \lthen B_{a} \phi$ is valid, and (3) $B_{a} \phi \lthen K_{a} \phi$ is \textit{not} valid.
This has a few further upshots. The usual techniques for translating frame models into simplicial models apply very cleanly (indeed, almost trivially) in our setting,
yielding soundness and completeness with respect to the simple $\full$ axiom system (Theorem \ref{Thm:Comp}).
The only wrinkle in the proof is the condition that the underlying
relational
frame be proper,
which is essentially circumvented via Theorem \ref{thm:note}.

This approach is similar to the ``directed'' approach taken in David Lehnherr's PhD thesis ``Simplicial Structures for Epistemic Reasoning in
Multi-agent Systems''. \cite{Lehnherr_2025} In this thesis, a knowledge model $\M$ is augmented with a function $\rho \colon V \times S \rightarrow 2$, which specifies whether or not a face in $S$ is considered ``possible'' from a node in $V$.
It is assumed that if $\rho(n,F)=1$, then there is $F'\in\F(S)$ such that $n\in F'$, and if $G\subseteq F$, $\rho(n,G)=1$.
The semantic clause for belief is given by
$\M,X\vDash B_a\varphi$ if and only if for all $Y$ such that $\rho(\pi_a(X),Y)=1$, $\M,Y\vDash\varphi$.

Our approach makes a simplifying assumption to emphasize the key notion of belief subcomplexes, so we can explore their contribution more directly.
Namely,
we do not assume that variables are assigned to nodes, but rather to facets.\footnote{Future work in our belief semantics will re-introduce this assumption.} Nevertheless, we can define the collection of subcomplexes
and $\rho$ interchangeably, at least for facets. For the first direction, $S_a:=\{F\in\F(S)~|~\rho(\pi_a(F),F)=1\}$. For the other direction, if $F\in\F(S)$, say that $\rho(\pi_a(F),F)=1$ if and only if $F\in\F(S_a)$, and $0$ otherwise.

There is another, more recent approach which attempts to model belief using chromatic hypergraphs. \cite{HGD} Given $N$ and $V$ as above, a
\defin{chromatic hypergraph} $C$ is a subset of $2^N\times 2^N$, called ``hyper edges'', which are the stand-in for worlds.
For each $(X,Y)\in C$, $X\cap Y=\emptyset$. Moreover, if $n,n'\in X\cup Y$, and if $V(n)=V(n')$, then $n=n'$. Hence, each ordered pair $(X,Y)\in C$ is uniquely colored across the union of $X$ and $Y$. It is also assumed that, for each $n\in N$, there is a pair $(X,Y)$ such that $n\in X\cup Y$, and $|X\cup Y|=|Ag|$. We can think of those pairs $(X,Y)$ where $|X\cup Y|=|Ag|$ as the equivalent of ``facets'' for chromatic hypergraphs. And, the disjoint pair $X$ and $Y$ divide each ``facet'' into two pieces. This is how doxastic information is read off of a chromatic hypergraph. $X$ is the set of nodes that consider the whole hyper edge $(X,Y)$ ``doxastically accessible'' or possible. Consider two hyper edges $(X,Y)$ and $(X',Y')$. If $n\in(X\cup Y)\cap X'$, then $V(n)$ considers $(X',Y')$ possible from $(X,Y)$. While this approach is promising, it requires
structural assumptions significantly more elaborate than the usual presentation of simplicial semantics.
We hope then that our approach
provides a simpler and more streamlined way of incorporating belief into simplicial semantics.

As mentioned above, the existing literature has largely taken properness as a technical condition, introduced in order to facilitate translations from relational to simplicial models.
\cite{KaSC,DoA,Death,SimpDEL}
However, as noted at the end of Section \ref{sec:prop}, in order to ``make'' a model proper we have to multiply worlds in a way that produces substantial redundancy. As a result, the structures become harder to parse---both the relational models and their simplicial correlates. We suspect, for example, that few will find the simplicial model in Figure \ref{Figure3} as easy to understand as the relational model in Figure \ref{fgr:3ag}.

It is natural to wonder, then, if there is any way to avoid imposing properness altogether. One promising approach utilizes the notion of \emph{simplicial sets}.\footnote{The
	idea of using simplicial sets
	in the context of simplicial models
	was introduced in \cite{SimpSet}. Recent work expanding this idea is contained in \cite{CACHIN2025114902}}
Roughly speaking, simplicial sets generalize simplicial complexes by allowing us to count faces \textit{with multiplicity},
similar to the generalization of a set to a multiset, or a graph to a multigraph. For instance,
in a simplicial
complex,
the triangle $\{a,b,c\}$ is uniquely determined by
the three nodes it contains, whereas
in a simplicial set, there could be multiple triangles
on the same underlying set of nodes
$\{a,b,c\}$.
Visually, one may picture multiple triangular ``membranes'' connecting these nodes.
Informally, we may
define a simplicial set as a ``multiset of multisets closed under subsets''.

Using this structure,
one can capture the model
in Figure \ref{fgr:3ag} with a
simplicial set $S$ together with subsimplicial sets $S_{a}$, $S_{b}$, and $S_{c}$, as depicted
in Figure \ref{fgr:3agsimpset}.

\begin{figure}[htbp]
\begin{center}
	\begin{tikzcd}[scale cd=1.5]
		& \color{red}a_1 \arrow[ldd, no head] \arrow[violet, ldd, no head, bend right] \arrow[yellow, ldd, no head, bend right=49] &                                                                                                             \\
		&                                                                                                          &                                                                                                             \\
		\color{blue}b_1 \arrow[rr, no head] \arrow[violet, rr, no head, bend right] \arrow[yellow, rr, no head, bend right=49] &                                                                                                          & \color{green}c_1 \arrow[luu, no head] \arrow[violet, luu, no head, bend right] \arrow[yellow, luu, no head, bend right=49]
	\end{tikzcd}
	\caption{A translation of the model from Figure \ref{fgr:3ag} into a simplicial
		simplicial set $S$ together with subsimplicial sets $S_{a}$, $S_{b}$, and $S_{c}$
		(violet facets belong to $S_a$ and $S_b$, yellow facets belong to $S_a$ and $S_c$, while black facets belong to only $S$)}
	\label{fgr:3agsimpset}
\end{center}
\end{figure}
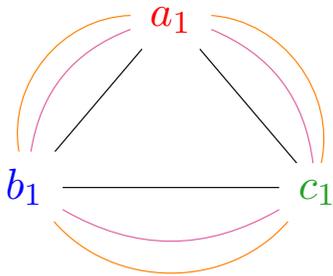\label{Figure4}

Intuitively,
the node $a_1$ corresponds to the
(unique!)
perspective that agent $a$ has in Figure \ref{fgr:3ag}, namely the worlds $w_1$ and $w_2$. Similarly, $b_1$ corresponds to
the view that only $w_{1}$ is possible,
and $c_1$ corresponds to only seeing $w_2$.
But we now have the freedom to associate not one but three facets with this set of perspectives.
The black facet corresponds with $w_0$, the world accessible to none of the three agents. The violet facet
belongs to
$S_a$ and $S_b$,
but not $S_{c}$, corresponding to
$w_1$, the world accessible to agents $a$ and $b$
but not $c$. And similarly the yellow facet
belongs to
$S_a$ and $S_c$,
but not $S_{b}$, corresponding to
$w_2$.
Thus we see that simplicial sets give us the resources to represent non-proper models without artificially proliferating perspectives; this provides a promising direction for future research in simplicial semantics and particularly in the representation of belief therein.

\commentout{\textit{UCF} means the same as before, with the understanding that multiplicity of the same node also violates \textit{UCF}. For example, the edge $\{n,n\}$ is not allowed because it necessarily contains ``two'' nodes assigned to the same agent. We further restrict the simplicial set models by the restriction \textit{OFM}, for ``Only Facet Multiplicity''. This rule says that a simplicial set may only contain multiple instances of a face if that face is a facet. In conjunction with \textit{UCF}, \textit{OFM} ensures that the only faces which are possibly represented multiply are those with exactly one node for each agent. That is, a \textit{UCF} and \textit{OFM} simplicial set is a \textit{UCF} simplicial complex with possibly multiple copies of some or all of the facets. Also, we extend the domain of $\mathcal{F}$ so that when given a simplicial set $S$ it returns the multiset of facets of $S$. 

Let $N$ $V$, and $L$ be exactly as before. Furthermore, let $\mathfrak{M}(N,V,L)$ be as before, but instead contain countably many copies of every face. Then $S$ and each $S_a$ is a \textit{UCF} and \textit{OFM} sub simplicial set of $\mathfrak{M}(N,V,L)$. There is another problem simplicial sets can help solve, and this is the divide between the two major modes in the simplicial semantics literature thus far: assigning logical atoms to the nodes versus the facets of a simplicial complex. As we shall see, simplicial sets allow us to ``have our cake and eat it too'', so to speak. To see this, we need to specify an additional function $L^+$, which specifies which atoms are true \textit{at which nodes}. That is, $L^+:\mathfrak{P}\rightarrow 3^N$. If $L(P)(n)=1$, we say that $P$ is true at $n$. If $L(P)(n)=0$, we say that $P$ is false at $n$. If $L(P)(n)=2$, we say that $n$ has no comment on $P$. The only restriction is that if $X\in L(P)$, then there is no $a\in A$ such that $L(P)(\pi_a(X))=0$, and if if $X\notin L(P)$, then there is no $a\in A$ such that $L(P)(\pi_a(X))=1$. This ensures that the maximal consistent set assigned to each facet is an extension of what is true at the nodes of the facet.

The truth conditions are the same as in the simplicial complex case. Note that with these truth conditions, at no point do we appeal to $L^+$.}

\bibliographystyle{eptcs.bst}	
\bibliography{Epistemology}

\end{document}